\renewcommand{\backref}[1]{}
\renewcommand{\backrefalt}[4]{% 
\ifcase #1 % 
\or 
[p.\ #2]% 
\else 
[pp.\ #2]% 
\fi}
\let\oldbibliography\thebibliography
\renewcommand{\thebibliography}[1]{%
  \oldbibliography{#1}%
  \setlength{\itemsep}{0pt}%
}
\newcommand*\Let[2]{\State #1 $\gets$ #2}
\algrenewcommand{\algorithmiccomment}[1]{\hfill $\triangleright$ #1}
\newtheorem{theorem}{Theorem}
\newtheorem{lemma}{Lemma}
\newtheorem{proposition}{Proposition}
\theoremstyle{definition}
\newtheorem{definition}{Definition}
\newcommand{\eq}[1]{\hyperref[eq:#1]{(\ref*{eq:#1})}}
\renewcommand{\sec}[1]{\hyperref[sec:#1]{Section~\ref*{sec:#1}}}
\newcommand{\thm}[1]{\hyperref[thm:#1]{Theorem~\ref*{thm:#1}}}
\newcommand{\lem}[1]{\hyperref[lem:#1]{Lemma~\ref*{lem:#1}}}
\newcommand{\prop}[1]{\hyperref[prop:#1]{Proposition~\ref*{prop:#1}}}
\newcommand{\cor}[1]{\hyperref[cor:#1]{Corollary~\ref*{cor:#1}}}
\newcommand{\fig}[1]{\hyperref[fig:#1]{Figure~\ref*{fig:#1}}}
\newcommand{\tab}[1]{\hyperref[tab:#1]{Table~\ref*{tab:#1}}}
\newcommand{\alg}[1]{\hyperref[alg:#1]{Algorithm~\ref*{alg:#1}}}
\newcommand{\app}[1]{\hyperref[app:#1]{Appendix~\ref*{app:#1}}}
\newcommand{\comment}[1]{}
\newcommand{\C}{{\mathcal{C}}}
\newcommand{\R}{{\mathbb{R}}}
\newcommand{\ceil}[1]{\lceil{#1}\rceil}
\renewcommand{\(}{\left(}
\renewcommand{\)}{\right)}
\newcommand{\defeq}{\colonequals}
\newcommand{\poly}{\mathop{\mathrm{poly}}}
\newcommand{\norm}[1]{\|{#1}\|}
\renewcommand{\>}{\rangle}
\newcommand{\<}{\langle}
\newcommand{\oip}{\textsc{oip}}
\newcommand{\MAJ}{\textsc{maj}}
\begin{document}

%%%%%%%%%%%%%%%%%%%%%%%%%%%%%%%%%%%%%%%%%%%%%%%%%%%%%%%%%%%%%%%%%%%%%%%%%%%%%%

\title{An optimal quantum algorithm for the oracle identification problem}

\author{
\normalsize Robin Kothari\\ [.5ex]
\small David R.\ Cheriton School of Computer Science and \\ 
\small Institute for Quantum Computing, University of Waterloo \\
\small \texttt{rkothari@uwaterloo.ca}
}

\date{}
\maketitle

%%%%%%%%%%%%%%%%%%%%%%%%%%%%%%%%%%%%%%%%%%%%%%%%%%%%%%%%%%%%%%%%%%%%%%%%%%%%%%
\begin{abstract}
In the oracle identification problem, we are given oracle access to an unknown $N$-bit string $x$ promised to belong to a known set $\C$ of size $M$ and our task is to identify $x$. We present a quantum algorithm for the problem that is optimal in its dependence on $N$ and $M$. Our algorithm considerably simplifies and improves the previous best algorithm due to Ambainis et al. Our algorithm also has applications in quantum learning theory, where it improves the complexity of exact learning with membership queries, resolving a conjecture of Hunziker et al.

The algorithm is based on ideas from classical learning theory and a new composition theorem for solutions of the filtered $\gamma_2$-norm semidefinite program, which characterizes quantum query complexity. Our composition theorem is quite general and allows us to compose quantum algorithms with input-dependent query complexities without incurring a logarithmic overhead for error reduction. As an application of the composition theorem, we remove all log factors from the best known quantum algorithm for Boolean matrix multiplication.
\end{abstract}

%%%%%%%%%%%%%%%%%%%%%%%%%%%%%%%%%%%%%%%%%%%%%%%%%%%%%%%%%%%%%%%%%%%%%%%%%%%%%%
\section{Introduction}
\label{sec:intro}

Query complexity is a model of computation where quantum computers are provably better than classical computers. Some of the great breakthroughs of quantum algorithms have been conceived in this model (e.g., Grover's algorithm~\cite{Gro96}). Shor's factoring algorithm~\cite{Sho97} also essentially solves a query problem exponentially faster than any classical algorithm. In this paper we study the query complexity of the oracle identification problem, the very basic problem of completely determining a string given oracle access to it.

In the oracle identification problem, we are given an oracle for an unknown $N$-bit string $x$, which is promised to belong to a known set $\C \subseteq \{0,1\}^N$, and our task is to identify $x$ while minimizing the number of oracle queries. For a set $\C$, we denote this problem ${\oip}(\C)$. As usual, classical algorithms are given access to an oracle that outputs $x_i$ on input $i$, while quantum algorithms have access to a unitary $O_x$ that maps $|i,b\>$ to $|i,b \oplus x_i\>$ for $b \in \{0,1\}$. For a function $f: D \to E$, where $D \subseteq \{0,1\}^N$, let $Q(f)$ denote the bounded-error quantum query complexity of computing $f(x)$. The problem $\oip(\C)$ corresponds to computing the identity function $f(x)=x$ with $D = E =\C$.

For example, let $\C_N \defeq \{0,1\}^N$. Then the classical query complexity of $\oip(\C_N)$ is $N$, since every bit needs to be queried to completely learn $x$, even with bounded error. A surprising result of van Dam shows that $Q(\oip(\C_N)) = N/2 + O(\sqrt{N})$~\cite{vDam98}. As another example, consider the set $\C_\text{H1} = \{x: |x| = 1\}$, where $|x|$ denotes the Hamming weight of $x$. This corresponds to the search problem with 1 marked item and thus $Q(\oip(\C_\text{H1})) = \Theta(\sqrt{N})$~\cite{BBBV97,Gro96}.

Due to the generality of the problem, it has been studied in different contexts such as quantum query complexity \cite{AIK+04,AIK+07}, quantum machine learning~\cite{SG04,AS05,HMP+10} and post-quantum cryptography \cite{BZ13}. Several well-known problems are special cases of oracle identification, e.g., the search problem with one marked element \cite{Gro96}, the Bernstein-Vazirani problem \cite{BV97}, the oracle interrogation problem~\cite{vDam98} and hidden shift problems \cite{vDHI06}. For some applications, generic oracle identification algorithms are almost as good as algorithms tailored to the specific application \cite{CKOR13}. Consequently, the main result of this paper improves some of the upper bounds stated in \cite{CKOR13}.

Ambainis et al.\ \cite{AIK+04,AIK+07} studied the oracle identification problem in terms of $N$ and $M\defeq|\C|$. They exhibited algorithms whose query complexity is close to optimal in its dependence on $N$ and $M$. For a given $N$ and $M$, we say an oracle identification algorithm is optimal in terms of $N$ and $M$ if it solves all $N$-bit oracle identification problems with $|\C|=M$ making at most $Q$ queries and there exists some $N$-bit oracle identification problem with $|\C|=M$ that requires $\Omega(Q)$ queries. This does not, however, mean that the algorithm is optimal for each set $\C$ individually, since these two parameters do not completely determine the query complexity of the problem. For example, all oracle identification problems with $M=N$ can be solved with $O(\sqrt{N})$ queries, and this is optimal since this class includes the search problem with 1 marked item ($\C_\text{H1}$ above).  However there exists a set $\C$ of size $M=N$ with query complexity $\Theta(\log N)$, such as the set of all strings with arbitrary entries in the first $\log N$ bits and zeroes elsewhere.

Let $\oip(M,N)$ denote the set of oracle identification problems with $\C \subseteq  \{0,1\}^N$ and $|\C| = M$. Let the query complexity of $\oip(M,N)$ be the maximum 
query complexity of any problem in that set.  Then the classical query complexity of $\oip(M,N)$ is easy to characterize: 

\begin{proposition}\label{prop:classicalOIP}
The classical (bounded-error) query complexity of $\oip(M,N)$ is $\Theta(\min\{M,N\})$. 
\end{proposition}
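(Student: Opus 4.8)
The plan is to prove matching upper and lower bounds, both elementary.

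For the upper bound I would give a deterministic algorithm making at most $\min\{M-1,N\}$ queries. If $M>N$, just query all $N$ bits of $x$. If $M\le N$, run the standard halving/consistency procedure: maintain the set $S\subseteq\C$ of strings consistent with the answers seen so far (initially $S=\C$); while $|S|\ge 2$ the members of $S$ are distinct, so they disagree on some coordinate $i$, and querying $x_i$ lets us discard every $y\in S$ with $y_i\ne x_i$. Since both bit values occur among the strings of $S$ at coordinate $i$, at least one string is discarded, so $|S|$ drops by at least one per query; after at most $M-1$ queries $|S|=1$, and since the true $x$ is always consistent, the surviving string is $x$. Either way the cost is $O(\min\{M,N\})$.

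For the lower bound I would produce, for every admissible size (say $2\le M\le 2^N$), a set $\C$ with $|\C|=M$ requiring $\Omega(\min\{M,N\})$ randomized bounded-error queries. Put $k\defeq\min\{M,N\}$, note $k-1\le N$, and let $\C$ consist of $0^N$ together with the $k-1$ weight-one strings $e_1,\dots,e_{k-1}$, padded (when $M>k$) with arbitrary further distinct strings of $\{0,1\}^N$ until $|\C|=M$; this padding is possible precisely because $M\le 2^N$. Since $\{0^N,e_1,\dots,e_{k-1}\}\subseteq\C$, and any bounded-error algorithm for $\oip(\C)$, run on inputs lying in a subset, solves the oracle identification problem for that subset with the same number of queries, it suffices to lower-bound $\oip(\{0^N,e_1,\dots,e_{k-1}\})$ --- essentially search among $k-1$ items with at most one marked. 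There a standard argument via Yao's principle works: under the uniform distribution on $e_1,\dots,e_{k-1}$, a deterministic $q$-query algorithm follows one fixed branch of queries until it first reads a $1$, hence can be correct on at most $q+1$ of the $k-1$ inputs, so success probability $\ge 2/3$ forces $q=\Omega(k)=\Omega(\min\{M,N\})$.

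I do not expect a genuine obstacle here; the only points needing care are bookkeeping. One must check that the hard instance can be made to have size \emph{exactly} $M$ for every $M\le 2^N$ (hence the explicit padding and the counting check), and observe that enlarging $\C$ never lowers the identification complexity, so the padding cannot spoil the lower bound. Degenerate ranges ($M\le 1$, or $\min\{M,N\}$ bounded by a fixed constant) are dismissed at the outset.
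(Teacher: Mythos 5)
Your proof is correct and follows essentially the same route as the paper: the upper bound eliminates at least one candidate per query (trivially $N$ queries when $M>N$), and the lower bound reduces to unstructured search among roughly $\min\{M,N\}$ weight-one strings. You spell out the Yao/branching argument and the padding bookkeeping that the paper leaves implicit, and you include $0^N$ in the hard set where the paper uses a subset of $\C_{\text{H1}}$ directly, but these are cosmetic differences, not a different approach.
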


For $M\leq N$, the upper bound follows from the observation that we can always eliminate at least one potential string in $\C$ with one query. For the lower bound, consider any subset of $\C_\text{H1}$ of size $M$. For $M > N$, the lower bound follows from any set $\C \supseteq \C_\text{H1}$ and the upper bound is trivial since any query problem can be solved with $N$ queries.

Now that the classical query complexity is settled, for the rest of the paper ``query complexity'' will always mean quantum query complexity.  When quantum queries are permitted, the $M\leq N$ case is fully understood. For a lower bound, we consider (as before) any subset of $\C_\text{H1}$ of size $M$, which is as hard as the search problem on $M$ bits and requires $\Omega(\sqrt{M})$ queries. For an upper bound, we can reduce this to the case of $M=N$ by selecting $M$ bits such that the strings in $\C$ are distinct when restricted to these bits. (A proof of this fact appears in \cite[Theorem 11]{CKOR13}.) Thus $Q(\oip(M,N)) \leq Q(\oip(M,M))$, which is $O(\sqrt{M})$ \cite[Theorem 3]{AIK+04}. In summary, we have the following.

\begin{proposition}
For $M\leq N$, $Q(\oip(M,N))= \Theta(\sqrt{M})$.
\end{proposition}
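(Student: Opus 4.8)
The plan is to establish the matching lower and upper bounds separately. First I would handle the lower bound: since $M\le N=|\C_\text{H1}|$, choose any $M$-element subset $\C'\subseteq\C_\text{H1}$; any algorithm solving $\oip(\C')$ must in particular decide which coordinate carries the unique $1$, which is exactly unstructured search with one marked item among $M$ candidates and therefore requires $\Omega(\sqrt M)$ queries by the optimality of Grover search \cite{BBBV97}. Hence $Q(\oip(M,N))\ge Q(\oip(\C'))=\Omega(\sqrt M)$.

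For the upper bound, I would fix an arbitrary $\C\subseteq\{0,1\}^N$ with $|\C|=M$ and reduce it to the case $M=N$. The key step is a column-selection argument: there is a set of coordinates $S\subseteq[N]$ with $|S|\le M-1$ such that the strings of $\C$ are pairwise distinct once restricted to $S$. This follows greedily: as long as two strings of $\C$ agree on all coordinates chosen so far, adjoin a coordinate witnessing some such disagreement; each step strictly increases the number of distinct restrictions, so the process terminates after at most $M-1$ steps (this is \cite[Theorem 11]{CKOR13}). Since $M\le N$, enlarge $S$ to a set of exactly $M$ coordinates; then $\C$ restricted to $S$ is a set of $M$ distinct $M$-bit strings, i.e.\ an instance of $\oip(M,M)$, and recovering $x$ on $S$ pins down which element of $\C$ equals $x$, hence all of $x$. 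Running the $O(\sqrt M)$-query algorithm of Ambainis et al.\ \cite[Theorem 3]{AIK+04} on this instance then gives $Q(\oip(M,N))\le Q(\oip(M,M))=O(\sqrt M)$, and combining with the lower bound yields $Q(\oip(M,N))=\Theta(\sqrt M)$.

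I expect no real obstacle here: the greedy column selection and Grover's lower bound are elementary, and the only substantial ingredient — the $O(\sqrt M)$ algorithm for $\oip(M,M)$ — I would invoke as a black box (it is a simpler special case of this paper's main theorem, so citing it separately is not circular). The only points I would check carefully are that restricting to the $M$ chosen coordinates loses no information about $x$ and that enlarging $S$ preserves both the pairwise distinctness of the restricted strings and the cardinality $M$; both are immediate.
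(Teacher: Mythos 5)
Your proof is correct and essentially identical to the paper's: the paper likewise obtains the lower bound from an $M$-element subset of $\C_\text{H1}$ (reduction from search) and the upper bound by selecting $M$ coordinates on which the strings of $\C$ are pairwise distinct, citing \cite[Theorem 11]{CKOR13} for that selection and \cite[Theorem 3]{AIK+04} for the $O(\sqrt{M})$ algorithm on $\oip(M,M)$. You've simply spelled out the greedy column-selection step that the paper delegates to \cite{CKOR13}.
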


For the hard regime, where $M > N$, the best known lower and upper bounds are the following, from \cite[Theorem 2]{AIK+04} and \cite[Theorem 2]{AIK+07} respectively.

\begin{theorem}[\cite{AIK+04,AIK+07}]\label{thm:AmbainisOIP}
If $N < M \leq 2^{N^{d}}$ for some constant $d<1$, then $Q(\oip(M,N))= O(\sqrt{{N\log M}/{\log N}})$ and for all $M > N$, $Q(\oip(M,N))= \Omega(\sqrt{{N\log M}/{\log N}})$. 
\end{theorem}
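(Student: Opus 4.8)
The statement splits into an upper bound, which is claimed only for $N<M\le 2^{N^{d}}$, and a lower bound, which is claimed for all $M>N$; I would handle the two separately.

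\emph{Lower bound.} The plan is to pack many independent copies of unstructured search into a single class. Fix a block count $k$, put $s=\lfloor N/k\rfloor$, partition the first $ks$ coordinates into $k$ consecutive blocks of length $s$, and let $\C$ be the set of strings that vanish outside these blocks and have Hamming weight exactly $1$ in each block; then $|\C|=s^{k}$, and by pruning the admissible patterns in one block we may assume $|\C|=M$ exactly. Solving $\oip(\C)$ is exactly the task of finding the marked coordinate in each of $k$ independent size-$s$ search instances, so I would invoke the strong direct product theorem for quantum search: any algorithm that recovers all $k$ answers with success probability bounded below by a constant must use $\Omega(k\sqrt{s})=\Omega(\sqrt{Nk})$ queries. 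It then remains to choose $k$: since $\log M=\log|\C|=k\log s$ and, when $M\le 2^{N^{d}}$, one may keep $k\le N^{d}$ so that $\log s=\Theta(\log N)$, this gives $k=\Theta(\log M/\log N)$ and hence the bound $\Omega(\sqrt{N\log M/\log N})$; for the remaining, larger values of $M$ a similar construction (padding the block family with a block of fully free coordinates) combined with the $\Omega(N)$ lower bound for learning all $N$ bits still dominates $\sqrt{N\log M/\log N}$, although that regime lies outside the scope of the matching upper bound anyway. The one delicate point is that running $k$ bounded-error searches and taking a union bound permits only an exponentially small joint success probability, so the \emph{strong} form of the direct product theorem (equivalently, a direct application of the adversary method to $\C$) is genuinely needed; I would cite it rather than reprove it.

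\emph{Upper bound.} I would quantize the classical halving (majority-vote) learner. Maintain the set $S\subseteq\C$ of strings still consistent with the answers received, so $|S|=M$ initially; repeatedly let $y$ be the coordinatewise majority string of $S$, search for a coordinate $i$ with $x_{i}\neq y_{i}$, and upon finding one set $S\leftarrow\{z\in S:z_{i}=x_{i}\}$, which at least halves $|S|$ because $x_{i}$ is a minority value at coordinate $i$; if the search reports that no such $i$ exists, then $x=y$ and we stop. The only quantum ingredient is amplitude amplification, which locates a disagreeing coordinate in $O(\sqrt{N/t})$ queries when $t$ of them exist. A straightforward implementation performs at most $\log_{2}M$ searches and costs $O(\sqrt{N}\log M)$; the work is to sharpen this to $O(\sqrt{N\log M/\log N})$, which by the matching lower bound is the best one can hope for.

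Two ingredients bridge the gap. First, a combinatorial reorganization of the learner so that the successive search costs decay fast enough for their sum to telescope: while $S$ is large many coordinates disagree with the majority string, making those searches cheap; once $S$ has shrunk, one restricts the search to the coordinates on which the strings of $S$ still disagree; and when $|S|$ drops below $N$ one switches to the known $O(\sqrt{|S|})$ algorithm for the easy regime. Grouping the steps into stages in which $\log|S|$ falls by a controlled amount — roughly $\Theta(\log M/\log N)$ stages, which is exactly where the hypothesis $M\le 2^{N^{d}}$ enters, since it forces $\log s=\Theta(\log N)$ and hence the right stage count — lets one bound the per-stage cost and sum it. Second, and this is where I expect the real difficulty, one must run the resulting $\Theta(\log M)$ adaptive search subroutines \emph{without} the usual penalty for amplifying each subroutine's success probability before composing: boosting $\Theta(\log M)$ subroutines to error $1/\poly(\log M)$ reintroduces a logarithmic factor and destroys optimality. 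The clean remedy is to stop treating the subroutines as black boxes and instead compose their optimal feasible solutions to the filtered $\gamma_{2}$-norm (state-conversion) semidefinite program that exactly characterizes quantum query complexity: such solutions can be glued together with the correct input-dependent, ``additive in squares'' accounting of query cost and with no error-reduction overhead — which is precisely the composition theorem the paper develops. With that theorem in hand, the classical learner merely dictates which search to compose at each step, and the semidefinite-program composition delivers the total query complexity $O(\sqrt{N\log M/\log N})$.
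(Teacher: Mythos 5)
This theorem is cited from Ambainis et al.~\cite{AIK+04,AIK+07}; the paper does not reprove it, and the closest material the paper contains is Appendix~A (a sharper form of the lower bound) and Section~2 (the algorithm, which is the paper's new, stronger one). With that caveat, your outline is sound for both halves, but each follows a different route than the sources do.

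For the lower bound, the route in~\cite{AIK+04} (reproduced in the paper's Appendix~A as \lem{threshold}) takes $\C$ to be all strings of Hamming weight $k-1$ or $k$ and reduces from the promise $k$-threshold problem, giving $\Omega(\sqrt{(N-k+1)k})$ by Ambainis's original adversary bound; one then picks $k$ so that $\binom{N}{k-1}+\binom{N}{k}\le M$ and a short computation (\lem{binom}) handles all $M>N$ at once, in fact delivering the stronger $\Omega(\sqrt{N\log M/(\log(N/\log M)+1)})$. Your alternative --- packing $k$ disjoint single-marked-item search blocks and invoking the strong direct product theorem for search --- does give the right asymptotics in the range where $k\le N^{d}$, but it reaches for a much heavier hammer than the basic adversary bound (SDPT for search is itself a substantial theorem), and your remark that $M>2^{N^{d}}$ can be handled by ``padding with a free block'' is left at the level of a gesture; the threshold construction covers the full range uniformly with no such case split.

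For the upper bound, what you describe is the paper's \emph{own} algorithm for \thm{quantumOIP} --- the majority-string halving learner, Heged\H{u}s's reordering, and composition via the filtered $\gamma_2$-norm SDP --- which the paper introduces precisely because the~\cite{AIK+07} algorithm is different, more complicated, and restricted to $M\le 2^{N^{d}}$. Invoking ``the composition theorem the paper develops'' to prove a theorem attributed to earlier work is anachronistic: \cite{AIK+07} achieves $O(\sqrt{N\log M/\log N})$ without that machinery. What you have actually produced is the (correct) observation that \thm{quantumOIP} specializes to the claimed upper bound when $M\le 2^{N^{d}}$, since then $\log(N/\log M)=\Theta(\log N)$. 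That is a legitimate derivation of the bound as a corollary of the paper's main result, but it is not a reconstruction of the cited proof, and it should be presented as such.
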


When $M$ gets closer to $2^N$, their algorithm no longer gives nontrivial upper bounds. For example, if $M \geq 2^{N/\log N}$, their algorithm makes $O(N)$ queries. 
While not stated explicitly, an improved algorithm follows from the techniques of \cite[Theorem 6]{AIN+09}, but the improved algorithm also does not yield a nontrivial upper bound when $M \geq 2^{N/\log N}$.
 Ambainis et al. \cite{AIK+07} left open two problems, in increasing order of difficulty: to determine whether it is always possible to solve the oracle identification problem for $M=2^{o(N)}$ using $o(N)$ queries and to design a single algorithm that is optimal in the entire range of $M$.

In this paper we resolve both open problems by completely characterizing the quantum query complexity of the oracle identification problem in the full range $N < M \leq 2^N$. Our main result is the following: 

\begin{restatable}{theorem}{qOIP}\label{thm:quantumOIP}
For $N < M \leq 2^N$, $Q(\oip(M,N)) = \Theta\(\sqrt{\frac{N\log M}{\log({N}/{\log M})+1}}\)$.
\end{restatable}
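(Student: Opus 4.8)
There are two separate tasks: exhibiting one family in $\oip(M,N)$ that is hard, and giving an algorithm that works for \emph{every} $\C$ with $|\C|=M$.

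\emph{Lower bound.} The bound of \thm{AmbainisOIP} only gives $\Omega(\sqrt{N\log M/\log N})$, which is too weak once $\log M$ approaches $N$, so I would argue directly from a product-of-searches construction. Split the $N$ coordinates into $b$ blocks of size $\approx N/b$ and let $\C$ consist of all strings of Hamming weight exactly one in every block, so $|\C|=(N/b)^b$ (trimming or padding to make $|\C|$ exactly $M$ is routine). Identifying $x\in\C$ amounts to solving $b$ independent search problems on $N/b$ elements, and since the (negative-weight) adversary bound is additive under this kind of direct sum, $Q(\oip(\C))=\Omega(b\sqrt{N/b})=\Omega(\sqrt{bN})$, essentially as in \cite{AIK+04}. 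The new point is to optimize $b$ over the \emph{whole} range of $M$: choosing $b$ so that $b\log(N/b)=\Theta(\log M)$ forces $N/b=\Theta(\rho\log\rho)$ with $\rho\defeq N/\log M$, hence $b=\Theta(\log M/\log\rho)$ and $\sqrt{bN}=\Theta\big(\sqrt{N\log M/(\log(N/\log M)+1)}\big)$; the ``$+1$'' absorbs the degenerate regime $\log M=\Omega(N)$, where the blocks have constant size and the bound is simply $\Omega(N)$.

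\emph{Upper bound --- the algorithm.} The backbone would be the majority/halving strategy from exact learning with membership queries (Hegedűs; Ambainis et al.). Maintain the version space $V\subseteq\C$ of strings still consistent with the answers seen so far, let $h=\MAJ(V)$ be its bitwise majority string, and search for a coordinate $i$ with $x_i\neq h_i$: querying such an $i$ shrinks $|V|$ by at least a factor of two, and if no such $i$ exists then $x=h$ and we stop. Run quantumly, each step is an amplitude-amplification search whose cost depends on $x$ --- roughly the square root of (number of coordinates on which $V$ is still non-constant) over (number of disagreements). A naive implementation costs $O(\sqrt N\log M)$ and, worse, needs a logarithmic overhead to make each of the $\log M$ searches reliable. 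The two things that bring this down to the target are (i) restricting each search to a carefully chosen set of still-relevant coordinates and calibrating how aggressively $|V|$ is shrunk at each step, so that an amortized (potential-function) argument bounds the worst-case sum of the input-dependent search costs by $O\big(\sqrt{N\log M/(\log(N/\log M)+1)}\big)$, and (ii) the composition theorem below. When $\log M\le N^{1-\Omega(1)}$ the denominator is $\Theta(\log N)$, so this recovers \thm{AmbainisOIP}.

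\emph{Upper bound --- composition without error reduction, and the main obstacle.} The enabling tool is that bounded-error quantum query complexity is captured exactly by the optimal value of the filtered $\gamma_2$-norm semidefinite program (equivalently the dual adversary $\mathrm{ADV}^{\pm}$). I would prove a composition theorem at the level of feasible SDP solutions: given solutions for a sequence of subroutines whose query costs may vary with the input, one can assemble a single feasible solution whose value is, up to a constant, the worst-case \emph{sum} of the individual input-dependent values --- crucially without paying a logarithmic factor to boost the pieces' success probability. Applying this to the $\log M$ search subroutines and plugging in the amortized bound from (i) gives $Q(\oip(M,N))=O\big(\sqrt{N\log M/(\log(N/\log M)+1)}\big)$, matching the lower bound. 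The hard part will be exactly this composition together with its amortized analysis: building the composed SDP solution so that it neither pays for error reduction on the individual searches nor loses control over the fact that the schedule of subroutines is itself adaptive and input-dependent is the heart of the argument; once it is in place, matching the lower bound is just a matter of tuning the halving schedule so that the worst-case total is $\sqrt{\log M/(\log(N/\log M)+1)}$ times the cost of one search over $N$ coordinates.
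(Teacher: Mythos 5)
Your lower bound construction is genuinely different from the paper's and works. The paper (Appendix~\ref*{app:lb}) takes $\C$ to be all strings of Hamming weight $k-1$ or $k$; identifying $x$ is then at least as hard as the promise $k$-threshold problem, whose adversary bound is $\Omega(\sqrt{(N-k+1)k})$, and the remaining work (Lemma~\lem{binom}) is in choosing $k=\Theta\!\big(\log M/(\log(N/\log M)+1)\big)$ so that $\binom{N}{k-1}+\binom{N}{k}\le M$. Your block-search direct-sum family is an equally valid witness, and the arithmetic for $b$ checks out; the threshold construction sidesteps the need to invoke a direct-sum theorem for the adversary, but both arrive at the same bound.

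The upper bound has a genuine gap. You correctly identify the two pillars --- the halving algorithm and an SDP-level composition theorem that sums input-dependent costs without a $\log$ overhead --- and your description of the composition theorem matches what the paper proves in Section~\ref*{sec:gamma}. But the step from $O(\sqrt{N}\log M)$ down to the target is not a generic ``calibrated shrinkage'' or potential-function amortization; it is the specific reordering lemma of Heged\H{u}s, stated as Lemma~\lem{ordering} in the paper: for any version space $S$ there exist a string $s$ and a permutation $\sigma$ such that if the first $\sigma$-ordered disagreement between $x$ and $s$ is at position $p$, then $|S|$ shrinks by a factor of $\max\{2,p\}$. This simultaneously controls the round cost (finding the \emph{first} disagreement costs $O(\sqrt p)$, not $O(\sqrt{N/K})$) and the progress ($|S|$ drops by $\max\{2,p\}$), and it is exactly what yields the constrained optimization $\max\sum_i\sqrt{p_i}$ subject to $\sum_i p_i\le N$ and $\prod_i\max\{2,p_i\}\le M$ (Lemma~\lem{opt}), whose optimum is the claimed bound. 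Your sketch vacillates between ``search for any disagreement'' (whose cost depends on the number of current disagreements) and ``restrict to still-relevant coordinates'' without pinning down the invariant that couples per-round cost to per-round progress; without that coupling the amortization does not close. You flag this as ``the hard part,'' which is honest, but as written the upper bound argument does not go through.
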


The lower bound follows from the ideas in \cite{AIK+04}, but needs additional calculation. We provide a proof in \app{lb}. The lower bound also appears in an unpublished manuscript \cite[Remark 1]{AIN+09}. The $+1$ term in the denominator is relevant only when $M$ gets close to $2^N$; it ensures that the complexity is $\Theta(N)$ in that regime.

Our main result is the algorithm, which is quite different from and simpler than that of \cite{AIK+07}.  It is also optimal in the full range of $M$ as it makes $O\(\sqrt{\frac{N\log M}{\log({N}/{\log M})+1}}\)$ queries when $M \geq N$ and $O(\sqrt{M})$ queries when $M \leq N$.     Our algorithm has two main ingredients:

First, we use ideas from classical learning theory, where the oracle identification problem is studied as the problem of exact learning with membership queries \cite{Ang88}. In particular, our quantum algorithm is based on Heged\H{u}s' implementation of the halving algorithm \cite{Heg95}. Heged\H{u}s characterizes the number of queries needed to solve the classical oracle identification problem in terms of the ``extended teaching dimension'' of  $\C$. While we do not use that notion, we borrow some of the main ideas of the algorithm. This is further explained in \sec{upper}.

We now present a high-level overview of the algorithm. Say we know that the string in the black box, $x$, belongs to a set $S$. We can construct from $S$ a string $s$, known  as the ``majority string,'' which is 1 at position $i$ if at least half the strings in $S$ are 1 at position $i$. Importantly, for any $i$, the set of strings in $S$ that disagree with $s$ at position $i$ is at most half the size of $S$. Now we search for a disagreement between $x$ and $s$ using Grover's algorithm. If the algorithm finds no disagreement, then $x = s$. If it does, we have reduced the size of $S$ by a factor of 2. This gives an algorithm with query complexity $O(\sqrt{N}\log M)$, which is suboptimal. We improve the algorithm by taking advantage of two facts: first, that Grover's algorithm can find a disagreement faster if there are many disagreements to be found, and second, that there exists an order in which to find disagreements that reduces the size of $S$ as much as possible in each iteration.  The existence of such an order was shown by Heged\H{u}s \cite{Heg95}.

The second ingredient of our upper bound is a general composition theorem for solutions of the filtered $\gamma_2$-norm semidefinite program (SDP) introduced by Lee et al.\ \cite{LMR+11} that preserves input-dependent query complexities. We need such a result to resolve the following problem: Our algorithm consists of $k$ bounded-error quantum algorithms that must be run sequentially because each algorithm requires as input the output of the previous algorithm. Let the query complexities of the algorithms be $Q_1(x), Q_2(x), \ldots , Q_k(x)$ on input $x$.  If these were exact algorithms, we could merely run them one after the other, giving one algorithm's output to the next as input, to obtain an algorithm with worst-case query complexity $O(\max_x \sum_i Q_i(x))$. However, since these are bounded-error algorithms, we cannot guarantee that all $k$ algorithms will give the correct output with high probability.  One option is to apply standard error reduction, but this would yield an algorithm that makes $O(\max_x \sum_i Q_i(x) \log k)$ queries. Instead, we prove a general composition theorem for the filtered $\gamma_2$-norm SDP that gives us an algorithm that makes $O(\max_x \sum_i Q_i(x))$ queries, as if the algorithms had no error. A similar result is known for worst-case query complexity, but that gives a suboptimal upper bound of $O(\sum_i \max_x Q_i(x))$ queries. We prove this result in \sec{gamma}.

The oracle identification problem was also studied by At{\i}c{\i} and Servedio \cite{AS05}, who studied algorithms that are optimal for a given set $\C$. The query complexity of their algorithm depends on a combinatorial parameter of  $\C$, $\hat{\gamma}^\C$, which satisfies $2 \leq 1/\hat{\gamma}^\C \leq N+1$. They prove $Q(\oip(\C)) = O(\sqrt{1/\hat{\gamma}^\C}\log M \log\log M)$. Our algorithm for oracle identification, without modification, makes fewer queries than this bound. Our algorithm's query complexity is $O\(\sqrt{\frac{1/\hat{\gamma}^\C}{\log{1/\hat{\gamma}^\C}}}\log M\)$, which resolves a conjecture of Hunziker et al.\ \cite{HMP+10}. We prove this in \sec{quantumml}.

Our composition theorem can also be used to remove unneeded log factors from existing quantum query algorithms. As an example, we show how to improve the almost optimal Boolean matrix multiplication algorithm that requires $O(n\sqrt{l} \poly(\log n))$ queries \cite{JKM12}, where $n$ is the size of the matrices and $l$ is the sparsity of the output, to an algorithm with query complexity $O(n\sqrt{l})$. We show this in \sec{bmm}. We conclude with some discussion and open questions in \sec{open}.

%%%%%%%%%%%%%%%%%%%%%%%%%%%%%%%%%%%%%%%%%%%%%%%%%%%%%%%%%%%%%%%%%%%%%%%%%%%%%%
\section{Oracle identification algorithm}
\label{sec:upper}

In this section we explain the ideas that go into our algorithm and prove its correctness.  We also prove the query upper bound assuming we can compose bounded-error quantum algorithms without incurring log factors, which we justify in \sec{gamma}.

Throughout this section, let $x \in \C$ be the string we are trying to identify. For any set $S \in \{0,1\}^N$, let  $\MAJ(S)$ be an $N$-bit string such that $\MAJ(S)_i$ is 1 if $|\{y\in S:y_i = 1\}| \geq |\{y\in S:y_i = 0\}|$ and 0 otherwise. In words, $\MAJ(S)_i$ is $b$ if the majority of strings in $S$ have bit $i$ equal to $b$. Note that the string $\MAJ(S)$ need not be a member of $S$. 
In this paper, all logarithms are base 2 and for any positive integer $k$, we define $[k] \defeq \{1,2,\ldots,k\}$.

%%%%%%%%%%%%%%%%%%%%%%%%%%%%%%%%%%%%%%%%%%%%%%%%%%%%%%%%%%%%%%%%%%%%%%%%%%%%%%
\subsection{Basic halving algorithm}

We begin by describing a general learning strategy called the halving algorithm, attributed to Littlestone \cite{Lit88}.  Say we currently know that the oracle contains a string $x \in S\subseteq \C$. The halving algorithm tests if the oracle string $x$ is equal to $\MAJ(S)$. If it is equal, we have identified $x$; if not, we look for a bit at which they disagree. Having found such a bit $i$, we know that $x_i \neq \MAJ(S)_i$, and we may delete all strings in $S$ that are inconsistent with this. Since at most half the strings in $S$ disagree with $\MAJ(S)$ at any position, we have at least halved the number of potential strings.

To convert this into a quantum algorithm, we need a subroutine that tests if a given string $\MAJ(S)$ is equal to the oracle string $x$ and finds a disagreement otherwise.  This can be done by running Grover's algorithm on the bitwise $\textsc{xor}$ of $x$ and $\MAJ(S)$. This gives us the following simple algorithm.

\begin{algorithm}
  \caption{Basic halving algorithm
    \label{alg:halving}}
  \begin{algorithmic}[1]
    \Statex
      \Let{$S$}{$\C$} 
		\Repeat 
		\State{Search for a disagreement between $x$ and $\MAJ(S)$. If we find a disagreement, delete all inconsistent strings from $S$. If not, let $S \gets \{\MAJ(S)\}$.}
		\Until{$|S|=1$}
  \end{algorithmic}
\end{algorithm}

This algorithm always finds the unknown string $x$, since $S$ always contains $x$. The loop can run at most $\log M$ times, since each iteration cuts down the size of $S$ by a factor of 2. Grover's algorithm needs $O(\sqrt{N})$ queries, but it is a bounded-error algorithm.  For this section, let us assume that bounded-error algorithms can be treated like exact algorithms and need no error reduction. Assuming this, \alg{halving} makes $O(\sqrt{N}\log M)$ queries.

%%%%%%%%%%%%%%%%%%%%%%%%%%%%%%%%%%%%%%%%%%%%%%%%%%%%%%%%%%%%%%%%%%%%%%%%%%%%%%
\subsection{Improved halving algorithm}

Even assuming free error reduction, \alg{halving} is not optimal. Primarily, this is because Grover's algorithm can find an index $i$ such that $x_i \neq \MAJ(S)_i$ faster if there are many such indices to be found, and \alg{halving} does not exploit this fact.
Given an $N$-bit binary string, we can find a 1 with $O(\sqrt{{N}/{K}})$ queries in expectation, where $K>0$ is the number of 1s in the string \cite{BBHT98}. Alternately, there is a variant of Grover's algorithm that finds the first 1 (from left to right, say) in the string in $O(\sqrt{p})$ queries in expectation where $p$ is the position of the first 1. This follows from the known $O(\sqrt{N})$ algorithm for finding the first 1 in a string of size $N$ \cite{DHHM06}, by running that algorithm on the first $2^k$ bits, for $k=1,2,\ldots, \log N$. We can now modify the previous algorithm to look for the first disagreement between $x$ and $\MAJ(S)$ instead of any disagreement.

\begin{algorithm}
  \caption{Improved halving algorithm
    \label{alg:halving2}}
  \begin{algorithmic}[1]
    \Statex
      \Let{$S$}{$\C$} 
		\Repeat 
		\State{Search for the first disagreement between $x$ and $\MAJ(S)$. If we find a disagreement, delete all inconsistent strings from $S$. If not, let $S \gets \{\MAJ(S)\}$.}
		\Until{$|S|=1$}
  \end{algorithmic}
\end{algorithm}

As before, the algorithm always finds the unknown string. To analyze the query complexity, let $r$ be the number of times the loop repeats and $p_1, p_2, \ldots, p_r$ be the positions of disagreement found. After the first run of the loop, since a disagreement is found at position $p_1$, we have learned the first $p_1$ bits of $x$; the first $p_1-1$ bits agree with $\MAJ(S)$, while bit $p_1$ disagrees with $\MAJ(S)$. Thus we are left with a set $S$ in which all strings agree on these $p_1$ bits.  For convenience, we can treat $S$ as a set of strings of length $N-p_1$ (instead of length $N$).  Each iteration  reduces the effective length of strings in $S$ by $p_i$, which gives $\sum_i p_i \leq N$, since there are at most $N$ bits to be learned.  As before, the loop can run at most $\log M$ times, thus $r \leq \log M$. Finally, let us assume again that these bounded-error search subroutines are exact. Then this algorithm requires $O(\sum_i \sqrt{p_i})$ queries, which is $O(\sqrt{N\log M})$, by the Cauchy--Schwarz inequality.

%%%%%%%%%%%%%%%%%%%%%%%%%%%%%%%%%%%%%%%%%%%%%%%%%%%%%%%%%%%%%%%%%%%%%%%%%%%%%%
\subsection{Final algorithm}

While \alg{halving2} is an improvement over \alg{halving}, it is still not optimal. One reason is that sometimes a disagreement between the majority string and $x$ may eliminate more than half the possible strings. This observation can be exploited by finding disagreements in such a way as to maximize the reduction in size when a disagreement is found. This idea is due to  Heged\H{u}s \cite{Heg95}. 

To understand the basic idea, consider searching for a disagreement between $x$ and $\MAJ(S)$ classically. The most obvious strategy is to check if $x_1 = \MAJ(S)_1$, $x_2 = \MAJ(S)_2$, and so on until a disagreement is found.  This strategy makes more queries if the disagreement is found at a later position.  However, we could have chosen to examine the bits in any order. We would like the order to be such that if a disagreement is found at a later position, it cuts down the size of $S$ by a larger factor. Such an ordering would ensure that either we spend very few queries and achieve a factor-2 reduction right away, or we spend more queries but the size of $S$ goes down significantly. Heged\H{u}s shows that there is always a reordering of the bits that achieves this. The following lemma is similar to \cite[Lemma 3.2]{Heg95}, but we provide a proof for completeness.

\begin{lemma}
\label{lem:ordering}
For any $S \subseteq \{0,1\}^N$, there exists a string $s \in \{0,1\}^N$ and a permutation $\sigma$ on $N$, such that for any $p \in [N]$, $|S_p| \leq \frac{|S|}{\max\{2,p\}}$, where $S_p = \{y\in S: y_{\sigma(i)} = s_{\sigma(i)} \text{ for }1\leq i\leq p-1 \text{ and } y_{\sigma(p)} \neq s_{\sigma(p)}\}$, the set of strings in $S$ that agree with $s$ at $\sigma(1), \ldots , \sigma(p-1)$ and disagree with it at $\sigma(p)$.
\end{lemma}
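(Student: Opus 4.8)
The plan is to build $s$ and $\sigma$ greedily, choosing the bits in the order that maximizes the reduction factor at each step. Concretely, I would fix $s \defeq \MAJ(S)$ first. Then I construct $\sigma$ one position at a time: having already committed $\sigma(1),\dots,\sigma(p-1)$ and having restricted attention to the subset $T_{p-1} \subseteq S$ of strings that agree with $s$ at all of $\sigma(1),\dots,\sigma(p-1)$, I choose $\sigma(p)$ to be an index (among those not yet used) that maximizes the number of strings in $T_{p-1}$ disagreeing with $s$ there; equivalently, that minimizes the surviving set $T_p$. Then $S_p$ in the lemma statement is exactly $T_{p-1}\setminus T_p$, and I must show $|S_p|\le |S|/\max\{2,p\}$.

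The $p=1$ (and $p=2$) bound is immediate from the definition of the majority string: $S_1$ is the set of strings disagreeing with $\MAJ(S)$ at $\sigma(1)$, and since at most half of $S$ disagrees with the majority at any fixed coordinate, $|S_1|\le |S|/2$. For the general bound I would argue by a counting/averaging argument on $T_{p-1}$. The key claim is that if $T_{p-1}$ is the current surviving set and it has more than one element, then there is an as-yet-unused coordinate at which at least a $1/p$ fraction — better, one can push to get the $1/\max\{2,p\}$ bound — of strings in $T_{p-1}$ disagree with $s$; choosing that coordinate as $\sigma(p)$ gives $|S_p| = |T_{p-1}| - |T_p| \ge |T_{p-1}|/p$ while simultaneously $|T_p| \le (1-1/p)|T_{p-1}|$. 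Iterating the latter inequality from step $1$ down to step $p$, together with $|T_1|\le |S|/2$, telescopes to $|T_{p-1}| \le \frac{|S|}{2}\cdot\frac{1}{2}\cdot\frac{2}{3}\cdots\frac{p-2}{p-1} = \frac{|S|}{p}$ — wait, I need to be careful with the indexing — the product $\prod_{j=2}^{p-1}(1-1/j) = 1/(p-1)$, so $|T_{p-1}|\le |S|/(2(p-1))$ and hence $|S_p|\le |T_{p-1}| \le |S|/(2(p-1)) \le |S|/p$ for $p\ge 2$, and directly $|S_p| \le |S|/p$ matches the claim. I would also double-check the edge case where $T_{p-1}$ has collapsed to a single string (then $S_p=\emptyset$ and the bound is trivial), and handle the coordinates $\sigma(p),\dots,\sigma(N)$ that never get "used" by filling in the permutation arbitrarily, noting the corresponding $S_p$ are empty.

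The main obstacle is establishing the existence of a coordinate achieving the $1/\max\{2,p\}$ disagreement fraction at step $p$ — i.e. getting the constants exactly right rather than just an $O(1/p)$ bound. The natural averaging argument (sum the disagreement counts over all unused coordinates and divide) needs the right lower bound on the total disagreement mass, which is where the structure of the majority string and the fact that the strings in $T_{p-1}$ are still pairwise distinct on the unused coordinates must be used. I expect the cleanest route is Heged\H{u}s' original argument: relate the quantity to how many coordinates suffice to "pin down" a string, and extract the extremal coordinate from a pigeonhole over the distinct strings remaining. If the tight constant proves fiddly, a fallback is to prove the slightly weaker $|S_p|\le 2|S|/(p+1)$ or similar, which still suffices for the Cauchy–Schwarz-based query bound in the subsequent analysis, but I would aim for the stated inequality since the proof is claimed to follow \cite[Lemma 3.2]{Heg95}.
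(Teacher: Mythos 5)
Your construction differs from the paper's in a notable way: you fix $s = \MAJ(S)$ once at the outset, whereas the paper defines $s$ adaptively, setting $s_{\sigma(k)}$ to the majority bit of the \emph{current} surviving set $\bar{S}_{k-1}$. Either choice of $s$ can be made to work, but the argument you sketch does not. The claim you flag as the ``main obstacle'' --- that at step $p$ there is an unused coordinate on which at least a $1/p$ fraction of $T_{p-1}$ disagrees with $s$ --- is simply false. Take $S = \{0^N, e_1, \ldots, e_N\}$ where $e_i$ has a single $1$ at position $i$; then $|S| = N{+}1$ and $\MAJ(S) = 0^N$. Every coordinate has exactly one disagreeing string, so after the first step (say $\sigma(1)=1$) we have $T_1 = \{0^N, e_2, \ldots, e_N\}$ of size $N$, and every unused coordinate $j$ sees only $e_j$ disagree: a fraction $1/N$, not $1/2$. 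Thus $|T_2| = N-1$, far above $|T_1|/2$, and the telescoping bound $|T_{p-1}| \le |S|/(2(p-1))$ that you derive collapses. (The lemma itself is not violated here --- every $|S_p|$ equals $1$ --- but it cannot be reached by the detour of bounding $|T_{p-1}|$, because $|T_{p-1}|$ can stay close to $|S|$ even after many steps.)

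The paper bounds $|S_p|$ directly rather than through $|T_{p-1}|$. The crux is that the greedy choice of $\sigma$ makes the sequence $|S_1| \ge |S_2| \ge \cdots$ nonincreasing; since the $S_i$ are pairwise disjoint subsets of $S$, this gives $p\,|S_p| \le \sum_{i=1}^p |S_i| \le |S|$, and $|S_1| \le |S|/2$ is immediate from the definition of the majority. With your fixed $s = \MAJ(S)$ the monotonicity is in fact a one-liner from the pieces you already have: $\sigma(k+1)$ was a candidate at step $k$ but was not chosen, so
\[
|S_{k+1}| = |\{y\in T_k : y_{\sigma(k+1)}\ne s_{\sigma(k+1)}\}| \le |\{y\in T_{k-1} : y_{\sigma(k+1)}\ne s_{\sigma(k+1)}\}| \le |S_k|.
\]
(The paper's monotonicity argument is more delicate precisely because its reference string changes from step to step, so the quantity that $\sigma(k)$ maximizes is not literally the quantity defining $S_{k+1}$.) In short: keep your construction, but drop the telescoping on $|T_j|$ and replace it with this disjointness-plus-monotonicity argument for $|S_p|$.
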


\begin{proof}
We will construct the permutation $\sigma$ and string $s$ greedily, starting with the first position, $\sigma(1)$.  We choose this bit to be one that intuitively contains the most information, i.e., a bit for which the fraction of strings that agree with the majority is closest to 1/2.  This choice will make $|S_1|$ as large as possible.  More precisely, we choose $\sigma(1)$ to be any $j$ that maximizes $|\{y\in S: y_j \neq  \MAJ(S)_j\}|$. Then let $s_{\sigma(1)}$ be $\MAJ(S)_{\sigma(1)}$. 

In general, after having chosen $\sigma(1), \ldots , \sigma(k-1)$ and having defined $s$ on those bits, we choose $\sigma(k)$ to be the most informative bit assuming all previous bits have agreed with string $s$ on positions $\sigma(1), \ldots , \sigma(k-1)$.  This choice makes $|S_{k}|$ as large as possible.  
More precisely, define $\bar{S}_p = \{y \in S: y_{\sigma(i)} = s_{\sigma(i)} \text{ for all } 1 \leq i \leq p\}$.  We choose $\sigma(k)$  to be any bit $j$ that maximizes $|\{y \in \bar{S}_{k-1}: y_{j}\neq \MAJ(\bar{S}_{k-1})_j\}|$. 
Then let $s_{\sigma(k)}$ be $\MAJ(\bar{S}_{k-1})_{\sigma(k)}$.

This construction ensures that $|S_1| \geq |S_2| \geq \ldots \geq |S_N|$.  Since $\sigma(k)$ was chosen to maximize $|\{y \in \bar{S}_{k-1}: y_{j}\neq \MAJ(\bar{S}_{k-1})_j\}|$, we have $|S_k| = |\{y \in \bar{S}_{k-1}: y_{\sigma(k)}\neq \MAJ(\bar{S}_{k-1})_{\sigma(k)}\}| \geq|\{y \in \bar{S}_{k-1}: y_{\sigma(k+1)}\neq \MAJ(\bar{S}_{k-1})_{\sigma(k+1)}\}|$. The size of this set is at least $|\{y \in \bar{S}_k: y_{\sigma(k+1)}\neq \MAJ(\bar{S}_{k-1})_{\sigma(k+1)}\}|$, since $\bar{S}_{k} \subseteq \bar{S}_{k-1}$.  We do not know the value of $\MAJ(\bar{S}_{k-1})_{\sigma(k+1)}$ (e.g., it need not be equal to $s_{\sigma(k+1)}$), but we do know that it is either 0 or 1. So this term is at least $\min\{|\{y \in \bar{S}_k: y_{\sigma(k+1)}\neq 0\}|,|\{y \in \bar{S}_k: y_{\sigma(k+1)}\neq 1\}|\} = \min\{|\{y \in \bar{S}_k: y_{\sigma(k+1)}\neq s_{\sigma(k+1)}\}|,|\{y \in \bar{S}_k: y_{\sigma(k+1)} = s_{\sigma(k+1)}\}|\} = \min\{|S_{k+1}|,|\bar{S}_{k+1}|\}  = |S_{k+1}|$, where the last equality uses $|S_k| \leq |\bar{S}_k|$ for all $k$.

Finally, combining $|S_1| + \ldots + |S_p| \leq |S|$ with $|S_1| \geq |S_2| \geq \ldots \geq |S_p|$ gives us $|S_p| \leq  |S|/{p}$. Combining this with $|S_1| \leq  |S|/2$, which follows from the definition of $S_1$, yields the result.
\end{proof}

We can now state our final oracle identification algorithm.

\begin{algorithm}
  \caption{Final algorithm
    \label{alg:final}}
  \begin{algorithmic}[1]
    \Statex
      \Let{$S$}{$\C$} 
		\Repeat 
		\State{Let $\sigma$ and $s$ be as in \lem{ordering}. Search for the first (according to $\sigma$) disagreement between $x$ and $s$. If we find a disagreement, delete all inconsistent strings from $S$. If not, let $S \gets \{s\}$.}
		\Until{$|S|=1$}
  \end{algorithmic}
\end{algorithm}

As before, it is clear that this algorithm solves the problem.  Let us analyze the query complexity.  To compute the query complexity, let $r$ be the number of times the loop repeats. Let $p_1, p_2, \ldots, p_r$ be the positions of disagreement.  We have $\sum_{i=1}^r p_i \leq N$, as in \alg{halving2}.

Unlike the previous analysis, the bound $r \leq \log M$ can be loose, since the size of $S$ may reduce by a larger factor due to \lem{ordering}.  Instead, we know that each iteration reduces the set $S$ by a factor of $\max\{2,p_i\}$, which gives us $\prod_{i=1}^{r} \max\{2,p_i\} \leq M$. As before, we will assume the search subroutine is exact, which gives us a query upper bound of $O(\sum_{i=1}^{r} \sqrt{p_i})$, subject to the constraints $\sum_{i=1}^r p_i \leq N$ and $\prod_{i=1}^{r} \max\{2,p_i\} \leq M$.  We solve this optimization problem in \app{proofs} to obtain the following lemma.

\begin{restatable}{lemma}{opt}
\label{lem:opt}
Let $C(M,N)$ be the maximum value attained by $\sum_{i=1}^{r} \sqrt{p_i}$, subject to the constraints
$\sum_{i=1}^{r} p_i \leq N,$ $\prod_{i=1}^{r} \max\{2,p_i\} \leq M,$ $r \in [N]$ and $p_i \in [N]$ for all $i \in [r]$.
Then $C(M,N) = O\(\sqrt{\frac{N\log M}{\log({N}/{\log M})+1}}\)$ and $C(M,N) = O(\sqrt{M})$.
\end{restatable}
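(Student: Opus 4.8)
The plan is to treat this as a constrained optimization over real-valued $p_i$ (relaxing the integrality constraints, which only helps) and extract the two claimed bounds by two different groupings of the objective. First I would dispense with the ``small'' values: at most $\log M$ of the $p_i$ can satisfy $p_i \geq 2$ (since each such factor contributes at least $2$ to the product $\prod_i \max\{2,p_i\} \leq M$), and the $p_i = 1$ terms each contribute $1$ to $\sum_i \sqrt{p_i}$ while also consuming one unit of the budget $\sum_i p_i \leq N$; one checks they can contribute at most $O(\sqrt{N})$ in total and, separately, at most $\log M$ of them matter for the second bound, so they are harmless for both target expressions. So the real work is to bound $\sum_{i : p_i \geq 2} \sqrt{p_i}$ subject to $\sum_i p_i \leq N$ and $\sum_{i} \log p_i \leq \log M$ with $r \leq \log M$ effective terms.

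For the bound $C(M,N) = O(\sqrt M)$: by Cauchy--Schwarz, $\sum_i \sqrt{p_i} \leq \sqrt{r}\,\sqrt{\sum_i p_i}$ is too weak, so instead I would use concavity of $\sqrt{\cdot}$ together with the product constraint directly. Since $\sqrt{p} \leq p / \sqrt{2} \cdot \sqrt{2}/p \cdot \ldots$ — more cleanly, note $\sqrt{p_i} \leq p_i$ trivially fails to help, so the right move is: among terms with $p_i \geq 2$ we have $\sqrt{p_i} \leq \sqrt{2} \cdot \sqrt{p_i/2} \le \sqrt2 \cdot p_i/2 \cdot$ — better yet, use $\sqrt{p_i} \le \max\{2,p_i\}^{1/2}$ and the AM--GM/weighted argument $\sum_i \sqrt{a_i} \le \sqrt{\prod_i a_i} \cdot (\text{number of terms})$ when each $a_i \ge 2$? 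That is false in general. The clean route is: with $a_i := \max\{2,p_i\} \ge 2$ and $\prod a_i \le M$, maximize $\sum \sqrt{a_i}$; by concavity the max over a fixed number $r$ of terms with fixed product is attained at the extreme point where all slack is pushed into one coordinate, giving $\sum \sqrt{a_i} \le \sqrt{M/2^{r-1}} + (r-1)\sqrt 2 = O(\sqrt M)$ once one observes the dominant term. I would verify this boundary analysis via Lagrange multipliers or a direct exchange argument (replacing two values $a,b$ with $2, ab/2$ increases $\sqrt a + \sqrt b$), which is the cleanest way to see the maximum is at a corner.

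For the bound $C(M,N) = O\!\left(\sqrt{\tfrac{N\log M}{\log(N/\log M)+1}}\right)$: here I would apply Cauchy--Schwarz in the form $\sum_i \sqrt{p_i} \le \sqrt{r \sum_i p_i} \le \sqrt{N \log M}$, which recovers the suboptimal $O(\sqrt{N\log M})$ bound, and then sharpen it using the product constraint to argue that not all $p_i$ can be simultaneously small \emph{and} numerous. The key quantitative point: if many of the $p_i$ equal their minimum value (around $2$), the product constraint is slack and we could instead have fewer, larger $p_i$; optimizing $\sum \sqrt{p_i}$ under $\sum p_i = N$ and $\sum \log p_i = \log M$ by Lagrange multipliers gives all nontrivial $p_i$ equal to a common value $p^\star$, whence $r = \log M / \log p^\star$ and $r p^\star = N$, so $p^\star \log p^\star = (N \log p^\star)/r \cdot$ — solving $p^\star = N \log p^\star / \log M$ gives $\log p^\star = \Theta(\log(N/\log M))$ (for $M$ not too close to $2^N$), and the objective is $r\sqrt{p^\star} = \sqrt{r^2 p^\star} = \sqrt{r \cdot N} = \sqrt{N \log M / \log p^\star} = \Theta\!\big(\sqrt{N\log M/\log(N/\log M)}\big)$. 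The $+1$ in the denominator covers the regime $\log M = \Omega(N)$, where $p^\star = \Theta(1)$, $r = \Theta(\log M)$, and the bound degrades gracefully to $\Theta(\sqrt N) = \Theta(N/\sqrt{\log M}\cdot\sqrt{\log M}/\sqrt N \cdots)$ — concretely $\Theta(\sqrt{N})$ when $\log M = \Theta(N)$, matching $\sqrt{N\log M/(0+1)}$ up to the constraint $\log M \le N$.

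The main obstacle I anticipate is making the Lagrange-multiplier heuristic rigorous as an \emph{upper} bound: one must argue that the true maximum over all feasible integer tuples is no larger than the value at the symmetric critical point, handling (i) the $\max\{2,\cdot\}$ kink in the product constraint, (ii) the boundary cases where the optimum wants $p_i < 2$ or $r < \log M$, and (iii) the transition between the two regimes of $M$. I would handle this by a careful exchange/smoothing argument — show that any feasible tuple can be transformed, without decreasing $\sum\sqrt{p_i}$ and without violating feasibility, toward the equal-value configuration — rather than by appealing to continuous optimization directly; this also automatically yields both claimed bounds from the two relevant corner/critical configurations. The remaining steps are routine estimates of $\log p^\star$ in terms of $N/\log M$, which I would relegate to \app{proofs}.
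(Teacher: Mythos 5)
Your approach is genuinely different from the paper's, and it has a real gap at the crucial step. The paper proves the first bound by passing to a discretized \emph{linear} program (grouping the $q_i$ by $\lceil \log q_i\rceil$ and letting $x_k$ count how many fall in bucket $k$) and then exhibiting an explicit feasible solution to the \emph{dual} LP, which gives an unconditional upper bound with no appeal to KKT conditions or smoothing. You instead propose to locate the maximum by Lagrange multipliers and then argue, via an exchange/smoothing argument, that any feasible tuple can be pushed toward the symmetric critical configuration without decreasing the objective. That last step is exactly the hard part, and the smoothing you gesture at does not work in the obvious way: the feasible region is \emph{not} convex (the constraint $\sum_i \log p_i \le \log M$ has a concave left-hand side, so its sublevel set is not convex), and the two natural smoothing moves both go the wrong way. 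Arithmetic smoothing ($p_1,p_2 \mapsto \tfrac{p_1+p_2}{2},\tfrac{p_1+p_2}{2}$) increases $\sum \sqrt{p_i}$ but \emph{also} increases $\prod p_i$ (AM--GM), so it can take you out of the feasible region; geometric smoothing ($\log p_1,\log p_2 \mapsto$ their average) preserves $\prod p_i$ and decreases $\sum p_i$, but it \emph{decreases} $\sum\sqrt{p_i}$ (since $e^{t/2}+e^{(C-t)/2}$ is convex in $t$). Also, the KKT stationarity condition here is $\tfrac{1}{2\sqrt{p_i}} = y + z/p_i$, which has \emph{two} roots in $p_i$, so the critical configuration is not ``all nontrivial $p_i$ equal'' as stated. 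So the heart of the proof is left as a promissory note that is not obviously redeemable by the method you name. The paper's LP-duality route sidesteps all of this.

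A secondary remark: your route to $C(M,N)=O(\sqrt M)$ (an exchange argument pushing mass to a single coordinate of the product) can be made to work, but it is much heavier than needed. The paper simply observes that since each $\max\{2,p_i\}\ge 2$, one has $\prod_i \max\{2,p_i\} \ge \sum_i \max\{2,p_i\} \ge \sum_i p_i$, so the product constraint already forces $\sum_i p_i \le M$, $p_i\le M$, $r\le M$; one can therefore replace $N$ by $M$ throughout the first bound and read off $O(\sqrt M)$. Your preliminary reduction also contains a small slip: the $p_i=1$ terms contribute up to $\log_2 M$ (not ``$O(\sqrt N)$'') to the objective, because $r \le \log_2 M$; this happens to be harmless for both target bounds, but the stated reason is off.

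In summary: right heuristic and right final expressions, but the argument that would upgrade the Lagrangian critical point to a rigorous upper bound is missing, and the smoothing step you propose to fill it does not work as described. You would need something like the paper's discretize-then-LP-dualize step (or some other convexification) to close the gap.
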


Thus \alg{final} achieves the upper bound claimed in \thm{quantumOIP}, under our assumption. We can now return to the assumption that the search subroutine is exact. Since it is not exact, we could reduce the error with logarithmic overhead.  However, it is usually unnecessary to incur this loss in quantum query algorithms.  In the next section we prove this and rigorously establish the query complexity of \alg{final}.

%%%%%%%%%%%%%%%%%%%%%%%%%%%%%%%%%%%%%%%%%%%%%%%%%%%%%%%%%%%%%%%%%%%%%%%%%%%%%%
\section{Composition theorem for input-dependent query complexity}
\label{sec:gamma}

The primary aim of this section is to rigorously establish the query complexity of \alg{final}. Along the way, we will develop techniques that can be used more generally. Let us begin by describing what we would like to prove.
\alg{final} essentially consists of a loop repeated $r(x)$ times. We write $r(x)$ to make explicit its dependence on the input $x$.  The loop itself consists of running a variant of Grover's algorithm on $x$, based on information we have collected thus far about $x$.  Call these algorithms $A_1, A_2, \ldots, A_{r(x)}$. To be clear, $A_1$ is the algorithm that is run the first time the loop is executed, i.e., it looks for a disagreement under the assumption that $S = \C$. It produces an output $p_1(x)$, which is then used by $A_2$. $A_2$ looks for a disagreement assuming a modified set $S$, which is smaller than $\C$.  Let us say that in addition to $p_2(x)$, $A_2$ also outputs $p_1(x)$. This ensures that the output of $A_i$ completely describes all the information we have collected about $x$. Thus algorithm $A_{i+1}$ now only needs the output of $A_i$ to work correctly.

We can now view \alg{final} as a composition of $r(x)$ algorithms, $A_1, A_2, \ldots, A_{r(x)}$. It is a composition in  the sense that the output of one is required as the input of the next algorithm. We know that the expected query complexity of $A_i$ is $O(\sqrt{p_i(x)})$. If these algorithms were exact, then running them one after the other would yield an algorithm with expected query complexity $O(\sum_i \sqrt{p_i(x)})$. But since they are bounded error, this does not work.  However, if we consider their worst-case complexities, we can achieve this query complexity.  If we have $r$ algorithms $A_1, A_2, \ldots, A_r$ with worst-case query complexities $Q_i$, then there is a quantum algorithm that solves the composed problem with $O(\sum_i Q_i)$ queries. This is a remarkable property of quantum algorithms, which follows from the work of Lee et al.\ \cite{LMR+11}. We first discuss this simpler result before moving on to input-dependent query complexities.

\subsection{Composition theorem for worst-case query complexity}

We now show a composition theorem for solutions of the filtered $\gamma_2$-norm SDP, which implies a similar result for worst-case quantum query complexity. This follows from the work of Lee et al.\ \cite{LMR+11}, which we generalize in the next section.

As discussed in the introduction, let $D \subseteq \{0,1\}^N$, and consider functions that map $D \to E$. For any matrix $A$ indexed by elements of $D$, we define a quantity $\gamma(A)$. (To readers familiar with the notation of \cite{LMR+11}, this is the same as their $\gamma_2(A|\Delta)$.)

\begin{definition}
Let $A$ be a square matrix indexed by $D$. We define $\gamma({A})$ as the following program.
\begin{align}
\gamma({A}) & \defeq \min_{\{\ket{u_{x j}}, \ket{v_{y j}}\}} \max_{x \in D} \quad c(x)\\
\label{eq:constr1}
\text{subject to:}& \qquad \forall x \in D, \quad c(x) = \max \Big\{
\sum_j \norm{\ket{u_{xj}}}^2, \sum_j \norm{\ket{v_{xj}}}^2\Big\}\\
&\qquad \forall x,y \in D, \quad \sum_{j:x_j \neq y_j}  \<{u_{xj}}|{v_{yj}}\> = A_{xy}
\end{align}
\end{definition}

We use $\gamma(A)$ to refer to both the semidefinite program (SDP) above and its optimum value. For a function $f:D\to E$, let $F$ be its Gram matrix, defined as $F_{xy} = 1$ if $f(x) \neq f(y)$ and $F_{xy} = 0$ otherwise. Lee et al.\ showed that $Q(f) = \Theta(\gamma(J-F))$, where $J$ is the all-ones matrix. 

More generally, they showed that this SDP also upper bounds the quantum query complexity of state conversion. In the state conversion problem, we have to convert a given state $|s_x\>$ to $|t_x\>$. An explicit description of the states $|s_x\>$ and $|t_x\>$ is known for all $x \in D$, but we do not know the value of $x$. Since the query complexity of this task depends only on the Gram matrices of the starting and target states, define $S$ and $T$ by $S_{xy} = \<s_x|s_y\>$ and $T_{xy} = \<t_x|t_y\>$ for all $x,y \in D$.  Let $S \mapsto T$ denote the problem of converting states with Gram matrix $S$ to those with Gram matrix $T$.  If $F$ is the Gram matrix of a function $f$, then $J \mapsto F$ is the function evaluation problem.  Lee et al.\ showed that $Q(S \mapsto T) = O(\gamma(S-T))$, which generalizes $Q(f) = O(\gamma(J-F))$.

We now have the tools to prove the composition theorem for the filtered $\gamma_2$-norm SDP.

\begin{theorem}[\cite{LMR+11}]
\label{thm:worstcasecomp}
Let $f_0, f_1, \ldots, f_k$ be functions with Gram matrices $F_0, F_1, \ldots, F_k$. Let $C_1, C_2, \ldots, C_k$ be the optimum value of the SDPs for the state conversion problems  $F_0 \mapsto F_1, F_1 \mapsto F_2, \ldots , F_{k-1} \mapsto F_k$, i.e., for $i \in [k]$, $C_i = \gamma(F_{i-1} - F_i)$.  Then, $\gamma(F_0 - F_k) \leq \sum_{i=1}^k C_i$.
\end{theorem}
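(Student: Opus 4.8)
The plan is to construct an explicit feasible solution for the SDP $\gamma(F_0 - F_k)$ by ``stitching together'' optimal solutions of the $k$ intermediate SDPs $\gamma(F_{i-1}-F_i)$. Let $\{\ket{u^{(i)}_{xj}}, \ket{v^{(i)}_{xj}}\}$ be an optimal solution for $\gamma(F_{i-1}-F_i)$, so that $\sum_{j: x_j \neq y_j} \braket{u^{(i)}_{xj}}{v^{(i)}_{yj}} = (F_{i-1})_{xy} - (F_i)_{xy}$ for all $x,y \in D$, and $\max\{\sum_j \norm{\ket{u^{(i)}_{xj}}}^2, \sum_j \norm{\ket{v^{(i)}_{xj}}}^2\} \leq C_i$. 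The key telescoping observation is that $F_0 - F_k = \sum_{i=1}^k (F_{i-1}-F_i)$, so if we form direct sums $\ket{u_{xj}} \defeq \bigoplus_{i=1}^k \ket{u^{(i)}_{xj}}$ and $\ket{v_{xj}} \defeq \bigoplus_{i=1}^k \ket{v^{(i)}_{xj}}$ (over an index space that also tags which block $i$ a vector lives in), then $\braket{u_{xj}}{v_{yj}} = \sum_{i=1}^k \braket{u^{(i)}_{xj}}{v^{(i)}_{yj}}$, and summing over $j$ with $x_j \neq y_j$ recovers exactly $(F_0 - F_k)_{xy}$. So the inner-product constraint \eq{constr1}'s companion is satisfied.

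The next step is to bound the objective value $c(x)$ of this combined solution. Because the blocks are orthogonal, $\sum_j \norm{\ket{u_{xj}}}^2 = \sum_{i=1}^k \sum_j \norm{\ket{u^{(i)}_{xj}}}^2 \leq \sum_{i=1}^k C_i$, and identically for the $\ket{v_{xj}}$ vectors; hence $c(x) \leq \sum_{i=1}^k C_i$ for every $x \in D$. Therefore this is a feasible solution of $\gamma(F_0 - F_k)$ with value at most $\sum_i C_i$, and since $\gamma(F_0-F_k)$ is a minimization its optimum is at most $\sum_i C_i$. That completes the argument.

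I would expect the only real subtlety — rather than a genuine obstacle — to be bookkeeping the index sets correctly: the vectors $\ket{u^{(i)}_{xj}}$ for different $i$ may naturally live in different (or differently-dimensioned) Hilbert spaces, and one must take a genuine orthogonal direct sum $\bigoplus_i \mathcal{H}_i$ so that cross terms $\braket{u^{(i)}_{xj}}{v^{(i')}_{yj}}$ with $i \neq i'$ vanish. One should also note that the index $j$ ranges over $[N]$ consistently across all the SDPs (they all share the same input domain $D \subseteq \{0,1\}^N$), so the condition ``$x_j \neq y_j$'' is the same in every block and the telescoping goes through cleanly. No error reduction or rounding is needed anywhere, which is precisely why the bound is additive with no logarithmic loss; the next section presumably upgrades this to the input-dependent setting where the $c(x)$ must be controlled pointwise rather than in the worst case.
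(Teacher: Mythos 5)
Your proof is correct and matches the paper's approach: the paper uses the same telescoping decomposition $F_0 - F_k = \sum_{i=1}^k (F_{i-1}-F_i)$ and simply invokes the triangle inequality $\gamma(A+B) \le \gamma(A)+\gamma(B)$ from \cite{LMR+11} as a black box, whereas you have inlined the proof of that inequality via the orthogonal direct-sum construction. That construction is in fact precisely the one the paper spells out later in \lem{triangle} when it upgrades the argument to input-dependent cost functions, so your write-up is essentially a merged version of \thm{worstcasecomp} and \lem{triangle}.
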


This does not appear explicitly in \cite{LMR+11}, but simply follows from the triangle inequality $\gamma(A+B) \leq \gamma(A)+\gamma(B)$ \cite[Lemma A.2]{LMR+11}.  From this we can also show an analogous theorem for quantum query complexity, which states $Q(F_0 \mapsto F_k) = O(\sum_{i=1}^k Q(F_{i-1} \mapsto F_i))$. We do not prove this claim as we do not need it in this paper.

For our application, we require a composition theorem similar to \thm{worstcasecomp}, but for input-dependent query complexity.  However, it is not even clear what this means a priori, since the value $\gamma(J-F)$ does not contain information about input-dependent complexities.  Indeed, the value is a single number and cannot contain such information.  However, the SDP does contain this information and we modify this framework to be able to access this.

For example, let $f$ be the find-first-one function, which outputs the smallest $i$ such that $x_i=1$ and outputs $N+1$ if $x=0^N$.  There is a quantum algorithm that solves this with $O(\sqrt{f(x)})$ queries in expectation. Furthermore, there is a feasible solution for the $\gamma(J-F)$ SDP with $c(x)=O(\sqrt{f(x)})$, where $c(x)$ is the function that appears in \eq{constr1}. This suggests that $c(x)$ gives us information about the $x$-dependent query complexity.  The same situation occurs when we consider the search problem with multiple marked items.  There is a feasible solution with $c(x) = O(\sqrt{N/K})$ for inputs with $K$ ones.  This function $c(x)$ will serve as our input-dependent cost measure.

\subsection{Cost functions}

\begin{definition}[Cost function]
Let $A$ be a square matrix indexed by $D$. We say $c:D \to \R$ is a feasible cost function for $\gamma({A})$ if there is a feasible solution of $\gamma({A})$ with values $c(x)$ in eq. \eq{constr1}. 
Let the set of all feasible cost functions for $\gamma(A)$ be denoted $\Gamma(A)$.
\end{definition}

Note that if $c$ is a feasible cost function for $\gamma(J-F)$, then $\max_x c(x)$ is an upper bound on the worst-case cost, $\gamma(J-F)$, which is exactly what we expect from an input-dependent cost. We can now prove an input-dependent analogue of \thm{worstcasecomp} with $c(x)$ playing the role of $\gamma(J-F)$.

\begin{theorem}
\label{thm:comp}
Let $f_0,f_1, \ldots , f_k$ be functions with Gram matrices $F_0,F_1, \ldots, F_k$. Let $c_1, c_2, \ldots, c_k$ be feasible cost functions for $\gamma(F_0-F_1), \gamma(F_1 - F_2), \ldots, \gamma(F_{k-1} - F_{k})$, i.e., for $i \in [k]$, $c_i \in \Gamma(F_{i-1} - F_i)$. Then there is a $c \in \Gamma(F_0-F_k)$ satisfying $c(x) \leq \sum_i c_i(x)$ for all $x \in D$.
\end{theorem}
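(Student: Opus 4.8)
The plan is to generalize the proof of the triangle inequality $\gamma(A+B)\le\gamma(A)+\gamma(B)$ (the fact that already yields \thm{worstcasecomp}) so that it tracks the full cost function rather than only its maximum. The key algebraic observation is the telescoping identity $\sum_{i=1}^{k}(F_{i-1}-F_i)=F_0-F_k$; combined with the SDP definition, proving the theorem amounts to showing that direct-summing the $k$ given feasible solutions produces a feasible solution for $\gamma(F_0-F_k)$ whose cost function is dominated pointwise by $\sum_i c_i$.

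Concretely, fix for each $i\in[k]$ a feasible solution $\{\ket{u^{(i)}_{xj}},\ket{v^{(i)}_{yj}}\}$ of $\gamma(F_{i-1}-F_i)$ whose cost function (the quantity $c(x)$ appearing in \eq{constr1}) equals $c_i$; such a solution exists by the definition of $\Gamma(F_{i-1}-F_i)$. I would form the combined solution by taking, for each fixed pair $(x,j)$, the orthogonal direct sum over $i$ of the corresponding vectors:
\[
\ket{u_{xj}}\defeq\bigoplus_{i=1}^{k}\ket{u^{(i)}_{xj}},\qquad \ket{v_{yj}}\defeq\bigoplus_{i=1}^{k}\ket{v^{(i)}_{yj}},
\]
living in the direct sum of the $k$ ambient inner product spaces (which may have different dimensions — harmless, since the inclusions are isometric).

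Two verifications remain, both routine. First, the linear constraint: orthogonality of the direct summands gives $\langle u_{xj}|v_{yj}\rangle=\sum_i\langle u^{(i)}_{xj}|v^{(i)}_{yj}\rangle$, hence for all $x,y\in D$,
\[
\sum_{j:x_j\ne y_j}\langle u_{xj}|v_{yj}\rangle=\sum_{i=1}^{k}\sum_{j:x_j\ne y_j}\langle u^{(i)}_{xj}|v^{(i)}_{yj}\rangle=\sum_{i=1}^{k}(F_{i-1}-F_i)_{xy}=(F_0-F_k)_{xy},
\]
so the combined vectors are feasible for $\gamma(F_0-F_k)$. Second, the cost: feasibility of the $i$-th solution gives both $\sum_j\norm{\ket{u^{(i)}_{xj}}}^2\le c_i(x)$ and $\sum_j\norm{\ket{v^{(i)}_{xj}}}^2\le c_i(x)$, and norms add under orthogonal direct sums, so $\sum_j\norm{\ket{u_{xj}}}^2=\sum_i\sum_j\norm{\ket{u^{(i)}_{xj}}}^2\le\sum_i c_i(x)$, and likewise for the $v$'s. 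Therefore the cost function $c(x)=\max\{\sum_j\norm{\ket{u_{xj}}}^2,\sum_j\norm{\ket{v_{xj}}}^2\}$ of the combined solution lies in $\Gamma(F_0-F_k)$ and satisfies $c(x)\le\sum_i c_i(x)$, which is exactly the claim.

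There is essentially no hard step in this argument. The only subtlety worth flagging is that one obtains only an \emph{upper} bound $c(x)\le\sum_i c_i(x)$, not an equality — the outer $\max$ in \eq{constr1} need not split exactly across the direct summands — but the two-sided norm bounds above are precisely what makes the upper bound go through, and an upper bound is all the theorem asks for. The genuinely substantive work lies not in this SDP-level composition but in what follows it in the section: converting a feasible cost function $c$ for $\gamma(J-F)$ into a quantum algorithm whose \emph{expected} query cost on input $x$ is $O(c(x))$, so that \thm{comp} actually delivers the query bound claimed for \alg{final}.
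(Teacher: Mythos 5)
Your proof is correct and is essentially the paper's argument: the paper establishes the two-term triangle inequality (\lem{triangle}) via the same labeled direct-sum construction $\ket{u_{xj}}=\ket{1}\ket{u^{A}_{xj}}+\ket{2}\ket{u^{B}_{xj}}$ and then invokes it iteratively over the telescoping sum $\sum_i (F_{i-1}-F_i)=F_0-F_k$, whereas you carry out the $k$-fold direct sum in one step — a purely cosmetic difference. (Your closing remark overstates what remains: the paper only needs the worst case $\max_x c(x)$ to bound $\gamma(J-F_r)$ and hence $Q$, not a per-input expected-query conversion.)
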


As in the case of \thm{worstcasecomp}, this follows from an analogous triangle inequality.

\begin{lemma}\label{lem:triangle}
Let $A$ and $B$ be square matrices indexed by $D$. If $c_A \in \Gamma(A)$ and $c_B\in \Gamma(B)$, there exists a $c \in \Gamma(A+B)$ satisfying $c(x) \leq c_A(x) + c_B(x)$ for all $x \in D$.
\end{lemma}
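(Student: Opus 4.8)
The plan is to take the optimal (or near-optimal) feasible solutions witnessing $c_A \in \Gamma(A)$ and $c_B \in \Gamma(B)$ and glue them together by taking direct sums of the associated vectors. Concretely, let $\{\ket{u^A_{xj}}, \ket{v^A_{yj}}\}$ be a feasible solution for $\gamma(A)$ with cost function $c_A$, so that $\sum_{j : x_j \neq y_j} \braket{u^A_{xj} | v^A_{yj}} = A_{xy}$ and $\max\{\sum_j \norm{\ket{u^A_{xj}}}^2, \sum_j \norm{\ket{v^A_{xj}}}^2\} = c_A(x)$, and similarly $\{\ket{u^B_{xj}}, \ket{v^B_{yj}}\}$ for $\gamma(B)$ with cost $c_B$. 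I would then define, for each $x \in D$ and each index $j$, the concatenated vectors $\ket{u_{xj}} \defeq \ket{u^A_{xj}} \oplus \ket{u^B_{xj}}$ and $\ket{v_{yj}} \defeq \ket{v^A_{yj}} \oplus \ket{v^B_{yj}}$ in the direct-sum space. The point is that inner products add over a direct sum: $\braket{u_{xj} | v_{yj}} = \braket{u^A_{xj} | v^A_{yj}} + \braket{u^B_{xj} | v^B_{yj}}$, so summing over $j$ with $x_j \neq y_j$ gives $A_{xy} + B_{xy} = (A+B)_{xy}$, which is exactly the constraint \eq{constr1} (the second one) for $\gamma(A+B)$.

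The remaining step is to bound the new cost. By the direct-sum structure, $\norm{\ket{u_{xj}}}^2 = \norm{\ket{u^A_{xj}}}^2 + \norm{\ket{u^B_{xj}}}^2$, so $\sum_j \norm{\ket{u_{xj}}}^2 = \sum_j \norm{\ket{u^A_{xj}}}^2 + \sum_j \norm{\ket{u^B_{xj}}}^2 \leq c_A(x) + c_B(x)$, and the same bound holds for the $\ket{v}$ vectors. Hence the cost $c(x) = \max\{\sum_j \norm{\ket{u_{xj}}}^2, \sum_j \norm{\ket{v_{xj}}}^2\}$ of this feasible solution satisfies $c(x) \leq c_A(x) + c_B(x)$ for every $x$, and $c$ is by construction a feasible cost function for $\gamma(A+B)$, i.e.\ $c \in \Gamma(A+B)$. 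One minor technical point is that the index set $j$ over which the SDP ranges may differ between the two solutions; this is handled by padding the shorter family with zero vectors (or, equivalently, taking $j$ to range over the disjoint union of the two index sets and setting the missing coordinates to zero), which changes neither the norms nor the inner products.

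I do not expect a serious obstacle here — this is essentially the proof of \cite[Lemma~A.2]{LMR+11} tracked at the level of the per-input cost rather than the worst-case value, and the direct-sum trick does the work. The only thing to be careful about is bookkeeping: making sure the constraint is checked for \emph{all} pairs $x,y \in D$ (including $x = y$, where both sides are $0$ since the sum is empty — or more precisely $A_{xx} + B_{xx}$, which matches), and making sure that "feasible cost function" is interpreted as requiring a \emph{single} feasible solution realizing the values $c(x)$ simultaneously for all $x$, which is exactly what the direct sum produces. Once Lemma~\ref{lem:triangle} is in hand, \thm{comp} follows by induction on $k$: apply the lemma to $A = F_0 - F_1 + \cdots + F_{i-2} - F_{i-1}$ (with its inductively obtained cost function $\sum_{j<i} c_j$) and $B = F_{i-1} - F_i$, telescoping to $F_0 - F_k$ with cost $\sum_i c_i(x)$.
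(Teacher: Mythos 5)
Your proposal is correct and matches the paper's proof essentially verbatim: the paper also glues the two feasible solutions by a direct sum, writing $\ket{u_{xj}} = \ket{1}\ket{u^A_{xj}} + \ket{2}\ket{u^B_{xj}}$, which is just your $\ket{u^A_{xj}} \oplus \ket{u^B_{xj}}$ with explicit orthogonal tags, and then verifies the constraint and cost exactly as you do. (One small remark: the index $j$ in the SDP always ranges over the bit positions $[N]$, so there is no mismatch of index sets to worry about; what can differ is the Hilbert-space dimension of the vectors, and that is automatically absorbed by the direct sum.)
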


\begin{proof}
Since $c_A \in \Gamma(A)$ and $c_B\in \Gamma(B)$, there exist vectors that satisfy the following constraints:
$\sum_{j:x_j \neq y_j}  \<{u^{A}_{xj}}|{v^{A}_{yj}}\> = (A)_{xy}$ with $c_A(x) = \max \{\sum_j \norm{\ket{u^{A}_{xj}}}^2, \sum_j \norm{\ket{v^{A}_{xj}}}^2\}$ and $\sum_{j:x_j \neq y_j}  \<{u^{B}_{xj}}|{v^{B}_{yj}}\> = (B)_{xy}$ with $c_B(x) = \max \{\sum_j \norm{\ket{u^{B}_{xj}}}^2, \sum_j \norm{\ket{v^{B}_{xj}}}^2\}$.

Now define $\ket{u_{x j}}= \ket{1}\ket{u^{A}_{x j}} + \ket{2}\ket{u^{B}_{xj}}$ and 
$\ket{v_{x j}}= \ket{1}\ket{v^{A}_{x j}} + \ket{2}\ket{v^{B}_{xj}}$.  We claim that these vectors are feasible for $\gamma(J-G)$. The constraints are satisfied since $\sum_{j:x_j \neq y_j}  \<{u_{xj}}|{v_{yj}}\> = \sum_{j:x_j \neq y_j}  \<{u^{A}_{xj}}|{v^{A}_{yj}}\> + \sum_{j:x_j \neq y_j}  \<{u^{B}_{xj}}|{v^{B}_{yj}}\> = (A)_{xy} + (B)_{xy} = (A+B)_{xy}$. The cost function for this solution, $c(x)$, is $\max \{\sum_j \norm{\ket{u_{xj}}}^2, \sum_j \norm{\ket{v_{xj}}}^2\}$, which gives $c(x) = \max \{ \sum_j \norm{\ket{u^{A}_{xj}}}^2+\norm{\ket{u^{B}_{xj}}}^2, \sum_j \norm{\ket{v^{A}_{xj}}}^2+\norm{\ket{v^{B}_{xj}}}^2 \} \leq c_A(x) + c_B(x)$.
\end{proof}

In our applications, we will encounter algorithms that also output their input, i.e., accept as input $f(x)$ and output $(f(x),g(x))$.  Note that the Gram matrix of the function $h(x) = (f(x),g(x))$ is merely $H = F \circ G$, defined as $H_{xy} = F_{xy} G_{xy}$.  

Such an algorithm can either be thought of as a single quantum algorithm that accepts $f(x)\in E$ as input and outputs $(f(x),g(x))$ or as a collection of algorithms $A_e$  for each $e\in E$, such that algorithm $A_{f(x)}$ requires no input and outputs $(f(x),g(x))$ on oracle input $x$.  These are equivalent viewpoints, since in one direction you can construct the algorithms $A_e$ from $A$ by hardcoding the value of $e$ and in the other direction, we can read the input $e$ and call the appropriate $A_e$ as a subroutine and output $(e, A_e(x))$. Additionally, if the algorithm $A_{f(x)}$ makes $q(x)$ queries on oracle input $x$, the algorithm $A$ we constructed accepts $f(x)$ as input, outputs $(f(x),g(x))$, and makes $q(x)$ queries on oracle input $x$. While intuitive for quantum algorithms, we need to establish this rigorously for cost functions.

\begin{theorem}\label{thm:circ}
Let $f,g:D \to E$ be functions with Gram matrices $F$ and $G$. For any $e \in E$, let $f^{-1}(e) = \{x: f(x)=e\}$. For every $e \in E$, let $c_e:f^{-1}(e)\to \R$ be a feasible cost function for $\gamma(J - G_e)$, where $G_e$ denotes the matrix $G$ restricted to those $x$ that satisfy $f(x) = e$. Then there exists a $c\in \Gamma(F - F\circ G)$, such that $c(x) = c_{f(x)}(x)$.
\end{theorem}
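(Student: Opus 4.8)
The plan is to assemble a single feasible solution of $\gamma(F - F\circ G)$ out of the given per-fiber solutions of $\gamma(J - G_e)$ by tagging each vector with the value $e=f(x)$ of its index, so that vectors coming from different fibers become orthogonal. To see why this should work, unwind the target matrix: setting $A \defeq F - F\circ G$ we have $A_{xy} = F_{xy}(1-G_{xy})$, so $A_{xy}=0$ whenever $f(x)\neq f(y)$, while $A_{xy} = 1-G_{xy} = (J-G_e)_{xy}$ whenever $f(x)=f(y)=e$ (on the fiber $f^{-1}(e)$ the all-ones matrix $J$ restricts to the all-ones matrix). In other words $A$ is block-diagonal with respect to the partition $\{f^{-1}(e)\}_{e\in E}$ of $D$, and its $e$-th diagonal block is exactly $J-G_e$ — precisely the matrix for which a feasible solution has been handed to us, one block at a time.

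For each $e\in E$, feasibility of $c_e$ for $\gamma(J-G_e)$ supplies vectors $\{\ket{u^e_{xj}},\ket{v^e_{yj}}\}_{x,y\in f^{-1}(e),\,j\in[N]}$ with $\sum_{j:x_j\neq y_j}\<{u^e_{xj}}|{v^e_{yj}}\> = (J-G_e)_{xy}$ for $x,y\in f^{-1}(e)$ and $c_e(x) = \max\{\sum_j\norm{\ket{u^e_{xj}}}^2,\sum_j\norm{\ket{v^e_{xj}}}^2\}$. After embedding all of these solutions into one common ambient space (padding with zero vectors), fix an orthonormal family $\{\ket{e}\}_{e\in E}$ and define, for every $x\in D$ and $j\in[N]$,
\[
\ket{u_{xj}} \defeq \ket{f(x)}\ket{u^{f(x)}_{xj}}, \qquad \ket{v_{xj}} \defeq \ket{f(x)}\ket{v^{f(x)}_{xj}}.
\]

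The remaining work is to verify the two SDP constraints for these vectors. If $f(x)=f(y)=e$, then $\<{f(x)}|{f(y)}\>=1$ and $\sum_{j:x_j\neq y_j}\<{u_{xj}}|{v_{yj}}\> = \sum_{j:x_j\neq y_j}\<{u^e_{xj}}|{v^e_{yj}}\> = (J-G_e)_{xy} = A_{xy}$; if $f(x)\neq f(y)$, then $\<{f(x)}|{f(y)}\>=0$, so the sum is $0=A_{xy}$. Hence $\sum_{j:x_j\neq y_j}\<{u_{xj}}|{v_{yj}}\> = (F - F\circ G)_{xy}$ for all $x,y\in D$. For the cost, $\norm{\ket{u_{xj}}}^2 = \norm{\ket{u^{f(x)}_{xj}}}^2$ and $\norm{\ket{v_{xj}}}^2 = \norm{\ket{v^{f(x)}_{xj}}}^2$, so the cost of this solution at $x$ equals $\max\{\sum_j\norm{\ket{u^{f(x)}_{xj}}}^2,\sum_j\norm{\ket{v^{f(x)}_{xj}}}^2\} = c_{f(x)}(x)$. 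This exhibits a $c\in\Gamma(F - F\circ G)$ with $c(x)=c_{f(x)}(x)$, as required.

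I do not expect a genuine obstacle: the construction is the $|E|$-fold, block-aware analogue of the direct-sum trick in \lem{triangle}, and the inequality there becomes an equality here because each index $x$ lies in exactly one fiber. The only steps that need a little care are the bookkeeping that houses all the fiber solutions in a single Hilbert space over the common query index set $[N]$, and keeping the two cases $f(x)=f(y)$ and $f(x)\neq f(y)$ cleanly separated when checking the inner-product constraint — the first case invokes a constraint of $\gamma(J-G_e)$, while the second rests only on orthogonality of the labels $\ket{e}$.
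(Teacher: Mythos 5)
Your proof is correct and follows essentially the same approach as the paper: tag the fiber-$e$ solution vectors with an orthonormal label $\ket{f(x)}$, so that cross-fiber inner products vanish and within-fiber inner products reproduce $(J-G_e)_{xy}$, with costs unchanged. The only cosmetic differences are that you make the two cases $f(x)=f(y)$ and $f(x)\neq f(y)$ explicit (the paper compresses this into the single expression $F_{xy}\circ(J-G_{f(x)})_{xy}$, noting afterward that the ill-defined factor is killed by $F_{xy}=0$) and that you flag the minor step of embedding all fiber solutions into a common ambient space, which the paper leaves implicit.
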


\begin{proof}
We build a feasible solution for $\gamma(F - F \circ G)$ out of the feasible solutions for $\gamma(J-G_e)$. We have vectors $\{\ket{u^e_{x j}}, \ket{v^e_{y j}}\}$ for each $e \in E$ that satisfy  $\sum_{j:x_j \neq y_j}  \<{u^e_{xj}}|{v^e_{yj}}\> = (J-G_e)_{xy}$ for all $x,y \in f^{-1}(e)$ and $c_e(x) = \max \{\sum_j \norm{\ket{u^e_{xj}}}^2, \sum_j \norm{\ket{v^e_{xj}}}^2\}$.

Let $\ket{u_{x j}}= \ket{f(x)}\ket{u^{f(x)}_{xj}}$ and $\ket{v_{x j}}= \ket{f(x)}\ket{v^{f(x)}_{xj}}$.  This is a feasible solution for $\gamma(F - F\circ G)$, since $\sum_{j:x_j \neq y_j}  \<{u_{xj}}|{v_{yj}}\>  = \sum_{j:x_j \neq y_j}  \<f(x)|f(y)\>\<{u^{f(x)}_{xj}}|{v^{f(y)}_{yj}}\> = F_{xy} \circ (J - G_{f(x)})_{xy} = F_{xy} - (F \circ G)_{xy}$.  Note that when $f(x) \neq f(y)$, the value of $\sum_{j:x_j \neq y_j}  \<{u^{f(x)}_{xj}}|{v^{f(y)}_{yj}}\>$ is not known, but this only happens when $F_{xy} = 0$, which makes the term 0.  Lastly, the cost function for this solution is $\max \{\sum_j \norm{\ket{u_{xj}}}^2, \sum_j \norm{\ket{v_{xj}}}^2\}$, which is $\max \{ \sum_j \norm{\ket{u^{f(x)}_{xj}}}^2, \sum_j \norm{\ket{v^{f(x)}_{xj}}}^2 \} = c_{f(x)}(x)$.
\end{proof}

\subsection{Algorithm analysis}

We can now return to computing the query complexity of \alg{final}. Using the same notation as in the beginning of this section, for any $x \in \C$, we define $r(x)$ to be the number of times the repeat loop is run in \alg{final} for oracle input $x$ assuming all subroutines have no error. Similarly, let $p_1(x),p_2(x),\ldots p_{r(x)}(x)$ be the first positions of disagreement found in each run of the loop. Note that $p_1(x),p_2(x),\ldots p_{r(x)}(x)$ together uniquely specify $x$. Let $r = \max_x r(x)$.

We now define $r$ functions $f_1, \ldots, f_r$ as $f_1(x) = p_1(x), f_2(x) = (p_1(x),p_2(x)), \ldots, f_r(x) = (p_1(x), \ldots, p_r(x))$, where $p_k(x) = 0$ if $k>r(x)$. Thus if $P_i$ are the Gram matrices of the functions $p_i$, then $F_1 = P_1, F_2 = P_1 \circ P_2, \ldots, F_r = P_1 \circ P_2 \circ \cdots \circ P_r$. 

We will now construct a solution for $\gamma(J-F_r)$, using solutions for the intermediate functions $f_i$.  From \thm{comp} we know that we only need to construct solutions for $\gamma(J-F_1), \gamma(F_1 - F_2), \ldots ,\gamma(F_{r-1} - F_r)$. From \thm{circ} we know that instead of constructing a solution for $\gamma(F_k - F_{k+1})$, which is $\gamma(F_k - F_k \circ P_{k+1})$, we can construct several solutions, one for each value of $f_k(x)$.  More precisely, let $f_k:D \to E_k$; then we can construct solutions for $\gamma(J - P_{k+1}^e)$ for all $e\in E_k$, where $P_{k+1}^e$ is the matrix $P_{k+1}$ restricted to $x$ that satisfy $f_k(x) = e$.

For any $k$, the problem corresponding to $\gamma(J - P_{k+1}^e)$ is just the problem of finding the first disagreement between $x$ and a known string, which is the essentially the find-first-one function.  This has a solution with cost function $O(\sqrt{f(x)})$, which in this case is $O(\sqrt{p_{k+1}(x)})$.

\begin{theorem}
\label{thm:firstmarked}
Let $f$ be the function that outputs the smallest $i$ such that $x_i = 1$ and outputs $N+1$ if $x = 0^N$ and let $F$ be its Gram matrix. Then there is a $c \in \Gamma(J-F)$ such that $c(x) = O(\sqrt{f(x)})$.
\end{theorem}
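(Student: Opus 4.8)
The plan is to explicitly construct a feasible solution of the SDP $\gamma(J-F)$ whose cost function $c(x)$ equals the claimed $O(\sqrt{f(x)})$, by adapting the known dual/primal witness for the ``find first one'' problem. The starting point is the standard $O(\sqrt{p})$-query algorithm for finding the first $1$ at position $p$, obtained by running the $O(\sqrt{N})$ first-one-finding algorithm of \cite{DHHM06} on prefixes of length $2^1, 2^2, \ldots, 2^{\log N}$; this already tells us the answer should be achievable, but we need a feasible SDP solution, not just an algorithm. I would first handle the simpler building block: for a fixed threshold $t$, consider the problem of deciding whether $x$ restricted to its first $t$ bits is all zero (equivalently, $\mathrm{OR}$ on $t$ bits), and write down the known feasible $\gamma_2$-solution for $\mathrm{OR}_t$, which has cost $O(\sqrt{t})$ uniformly. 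The key point is that for inputs $x$ with $f(x) \le t$ the relevant cost can be taken as $O(\sqrt{f(x)})$ rather than $O(\sqrt t)$, because only the bits up to position $f(x)$ are ``used'' in the witness.

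Concretely, I would assign to each $x$ with $f(x) = p$ vectors $\ket{u_{xj}}, \ket{v_{xj}}$ supported only on coordinates $j \le p$, built from the canonical search solution: roughly $\ket{u_{xj}} \propto \ket{j}$ for $j < p$ and a suitably normalized vector at $j = p$, with $\ket{v_{xj}}$ chosen dually, so that $\sum_{j : x_j \ne y_j} \langle u_{xj} | v_{yj}\rangle$ correctly reproduces $(J-F)_{xy} = [f(x)\ne f(y)]$. The norm bookkeeping then gives $\sum_j \|\ket{u_{xj}}\|^2, \sum_j \|\ket{v_{xj}}\|^2 = O(p) \cdot O(1)$? — no: we need $O(\sqrt p)$, not $O(p)$, so the amplitudes must be weighted so that each of the $p$ relevant coordinates carries norm $O(1/\sqrt p \cdot p^{1/4})$-type weight; this is exactly the balancing that appears in the optimal search SDP, where the total squared norm is $O(\sqrt{N/K})$ for $K$ marked items, here $K=1$ effectively but truncated to the first $p$ bits. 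Alternatively, and more cleanly, I would invoke \thm{comp}/\thm{worstcasecomp}-style composition: decompose find-first-one into a sequence of ``is the first marked item in block $[2^{k-1}, 2^k]$?'' tests, each a search problem on $2^k$ bits with a feasible cost function $O(\sqrt{2^k})$ on the relevant inputs and $O(1)$ (or zero cost) on inputs already resolved, and sum the geometric series $\sum_{k \le \log p} O(\sqrt{2^k}) = O(\sqrt p) = O(\sqrt{f(x)})$ using the triangle inequality \lem{triangle}.

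The main obstacle I anticipate is the input-dependence of the cost: a naive composition of the $\log N$ blockwise search subroutines would give each a worst-case cost $O(\sqrt{2^k})$ regardless of $x$, summing to $O(\sqrt N)$ for every input rather than $O(\sqrt{f(x)})$. To get the cost to collapse to $O(\sqrt{f(x)})$ I must ensure that for an input with $f(x)=p$, all subroutines handling blocks beyond position $p$ contribute cost $O(1)$ (or are simply not invoked, in the spirit of the $A_e$ viewpoint of \thm{circ}) — i.e.\ the feasible solution for the $k$-th block must have cost function that is $O(\sqrt{2^k})$ only on inputs whose first one lies in that block, and negligible otherwise. Verifying that such a ``conditionally cheap'' feasible solution exists for each block — essentially that the search SDP solution can be made to have cost $0$ on the all-zero-prefix inputs that it must still correctly distinguish — is the delicate step, and it is precisely where the filtered $\gamma_2$-norm formulation (as opposed to a black-box algorithmic argument) earns its keep: one shows the inner products $\langle u_{xj}|v_{yj}\rangle$ can be made to vanish for the ``easy'' $x$ while still meeting the constraint $(J-F)_{xy}$, since those constraints only couple easy $x$ to inputs differing at coordinates outside the block. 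Once this is in place, \thm{comp} assembles the pieces and the geometric sum finishes the bound.
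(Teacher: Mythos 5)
Your high-level instinct is right—build an explicit feasible solution whose cost naturally depends on $f(x)$—but the proposal never actually produces one, and the one concrete guess you make at the weights would not be feasible. You suggest that for $x$ with $f(x)=p$ each of the $p$ relevant coordinates carries weight ``$O(1/\sqrt{p}\cdot p^{1/4})$'', i.e.\ a weight that depends on $p=f(x)$. That choice fails the SDP constraints: take $x$ with $f(x)=p$ and $y$ with $f(y)=q>p$. These inputs agree on positions $1,\dots,p-1$, disagree only at position $p$ (among positions where $u_{x\cdot}$ is nonzero), so the constraint $\sum_{j:x_j\neq y_j}\langle u_{xj}|v_{yj}\rangle=(J-F)_{xy}=1$ reduces to the single term $\langle u_{xp}|v_{yp}\rangle=1$. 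With $p$-dependent weights, $\|v_{yp}\|$ depends on $q=f(y)$, so $\|u_{xp}\|$ would have to scale with $\max_q q^{1/4}=\Theta(N^{1/4})$, which already makes $c(x)=\Omega(\sqrt N)$ for every $x$ and destroys the bound. The fix is that the weight must be a function of the \emph{position} $j$, not of $f(x)$: the paper sets $a_j=j^{-1/4}$ for $j<f(x)$, and the reciprocal $b_{f(x)}=f(x)^{1/4}$ at $j=f(x)$. Then the cross inner product is $a_p\cdot b_p=1$ regardless of $q$, and the cost is $\sum_{j<p}j^{-1/2}+\sqrt p=O(\sqrt p)$. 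Your alternative route via blockwise composition has the same missing ingredient: you correctly identify that each block's solution must have negligible cost on inputs whose first $1$ lies elsewhere, call it ``the delicate step,'' and then stop—so the proof is not completed along either path.

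If you want to salvage the direct construction: define $|u_{xj}\rangle=|v_{xj}\rangle$ to be the scalar $j^{-1/4}$ for $j<f(x)$, the scalar $f(x)^{1/4}$ for $j=f(x)$, and $0$ for $j>f(x)$. Feasibility then follows by the two-case check (either $f(x)<f(y)$, giving the single surviving term $=1$, or $f(x)=f(y)$, where all disagreements lie beyond $f(x)$ and the sum is $0$), and the cost telescopes to $O(\sqrt{f(x)})$. This is a one-parameter family you were circling around; the crucial observation you were missing is that the exponent $-1/4$ in $j^{-1/4}$ is forced by requiring both $\sum_{j<p}a_j^2=O(\sqrt p)$ and $a_p^{-2}=O(\sqrt p)$ simultaneously.
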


\begin{proof}
Let $a_k = k^{-1/4}$ and $b_k = 1/a_k = k^{1/4}$.  Define $|u_{xj}\> = |v_{xj}\>$ as the following.
\[
|u_{xj}\> = |v_{xj}\> = \begin{cases} a_j, & \text{if } j < f(x) \\ b_{f(x)}, & \text{if } j = f(x) \\ 0, & \text{if } j > f(x) .\end{cases}
\]
This is a feasible solution for $\gamma(J-F)$.  Since the constraints are symmetric in $x$ and $y$, there are two cases: either $f(x) < f(y)$ or $f(x) = f(y)$.  For the first case, $\sum_{j:x_j \neq y_j}  \<{u_{xj}}|{v_{yj}}\> = \sum_{j=f(x)}  \<{u_{xj}}|{v_{yj}}\> = a_{f(x)} b_{f(x)} = 1$, since $x$ and $y$ agree on all positions before $f(x)$.  For the second case, $\sum_{j:x_j \neq y_j}  \<{u_{xj}}|{v_{yj}}\> = 0$, since the only bits that $x$ and $y$ disagree on appear after position $f(x)=f(y)$.
To compute the cost function, note that $c(0^N) = \sum_{k=1}^{N} a_k^2 = O(\sqrt{N}) = O(\sqrt{f(0^N)})$. For all other $x$, $c(x) = \sum_{k=1}^{f(x)-1} a_k^2 + b_{f(x)}^2 = \sum_{k=1}^{f(x)-1} k^{-1/2} + \sqrt{f(x)} = O(\sqrt{f(x)})$.
\end{proof}

Our function is different from this one in two ways.  First, we wish to find the first disagreement with a fixed string $s$ instead of the first 1.  This change does not affect the Gram matrix or the SDP. Second, we are looking for a disagreement according to an order $\sigma$, not from left to right.  This is easy to fix, since we can replace $j$ with $\sigma(j)$ in the definition of the vectors in the proof above.

This shows that for any $k$, there is a feasible cost function for $\gamma(J - P_{k+1}^e)$ with cost $c(x)= O(\sqrt{p_{k+1}(x)})$ for any $x$ that satisfies $f_k(x) = e$. Using \thm{circ}, we get that for any $k$ there is a $c_k \in \Gamma(F_k - F_k\circ P_{k+1})$ with $c_k(x) = O(\sqrt{p_{k+1}(x)})$ for all $x \in D$.  Finally, using \thm{comp}, we have a $c \in \Gamma(J-F_r)$ with cost $c(x) = O(\sum_{i=1}^{r} \sqrt{p_i(x)}) = O(\sum_{i=1}^{r(x)} \sqrt{p_i(x)})$.

Since the function $f_r(x)$ uniquely determines $x$, we have a feasible cost function for oracle identification with cost $O(\sum_{i=1}^{r(x)} \sqrt{p_i(x)})$, subject to the constraints of \lem{opt}, which we have already solved. Along with the lower bound proved in \app{lb}, this yields the main result.

\qOIP*

%%%%%%%%%%%%%%%%%%%%%%%%%%%%%%%%%%%%%%%%%%%%%%%%%%%%%%%%%%%%%%%%%%%%%%%%%%%%%%
\section{Other applications}

\subsection{Quantum learning theory}
\label{sec:quantumml}

The oracle identification problem has also been studied in quantum learning theory with the aim of characterizing $Q(\oip(\C))$. The algorithms and lower bounds studied apply to arbitrary sets $\C$, not just to the class of sets of a certain size, as in the rest of the paper. We show that \alg{final} also performs well for any set $\C$, outperforming the best known algorithm.  The known upper and lower bounds for this problem are in terms of a combinatorial parameter $\hat{\gamma}^\C$, defined by Servedio and Gortler. They showed that for any $\C$, $Q(\oip(\C))=\Omega(\sqrt{1/\hat{\gamma}^\C} + \frac{\log M}{\log N})$ \cite{SG04}. Later,   At{\i}c{\i} and Servedio showed that $Q(\oip(\C)) =O(\sqrt{1/\hat{\gamma}^\C} \log M \log\log M)$ \cite{AS05}. 

While we do not define $\hat{\gamma}^\C$, we can informally describe it as follows:  $\hat{\gamma}^\C$ is the largest $\alpha<1$, such that for any set $S\subseteq \C$, if we know that $x$ belongs to $S$, there is a bit of $x$ that can be queried such that size of the set of strings consistent with the answer to this query is at most $(1-\alpha)|S|$, no matter what the oracle responds. This ensures that if we query the oracle with the permutation of \lem{ordering}, which was chosen to maximize the number of strings eliminated with a query, each query reduces the size of $S$ by a factor of $(1-\hat{\gamma}^\C)$.

This adds an extra constraint to \lem{opt} of the form $M \prod_i^r (1-\hat{\gamma}^\C)^{p_i} \geq 1$, since learning $p_i$ bits will reduce the size of the remaining set by a factor of  $(1-\hat{\gamma}^\C)^{p_i}$. From this constraint we get $(\sum_i p_i) \log(1-\hat{\gamma}^\C) \geq -\log M$. Using $\log(1-\hat{\gamma}^\C) \leq -\hat{\gamma}^\C$ gives $\sum_i p_i \leq \frac{\log M}{\hat{\gamma}^\C}$.

We may now replace the constraint $\sum_i p_i \leq N$ with $\sum_i p_i \leq \frac{\log M}{\hat{\gamma}^\C}$ in the optimization problem of \lem{opt}.  This inequality also implies $p_i \leq \frac{\log M}{\hat{\gamma}^\C}$ and $r \leq \frac{\log M}{\hat{\gamma}^\C}$. Thus we may simply replace all occurrences of $N$ by $\frac{\log M}{\hat{\gamma}^\C}$ in \lem{opt}. This yields the following theorem, which resolves a conjecture of Hunziker et al.\ \cite[Conjecture 2]{HMP+10}.

\begin{theorem}
\alg{final} solves $\oip(\C)$ with $O\(\sqrt{\frac{1/\hat{\gamma}^\C}{\log {1/\hat{\gamma}^\C}}}\log M\)$ queries.
\end{theorem}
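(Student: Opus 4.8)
The plan is to reuse the entire apparatus developed for \thm{quantumOIP} and to change only the combinatorial optimization that controls the total query count. Recall that in the analysis of \alg{final} we never used $|\C| = M$ nor any structure of $\C$: combining the composition theorems \thm{comp} and \thm{circ} with the find-first-disagreement cost function of \thm{firstmarked}, we obtained a feasible cost function for $\oip(\C)$ of value $O(\sum_{i=1}^{r(x)}\sqrt{p_i(x)})$, where $p_1(x),\dots,p_{r(x)}(x)$ are the positions of the successive disagreements found when the oracle holds $x$. This construction is valid for an arbitrary $\C$, so by the Lee et al.\ characterization it suffices to upper bound $\max_{x}\sum_i \sqrt{p_i(x)}$ over the feasible profiles $(p_i(x))_i$, i.e.\ to re-solve the optimization of \lem{opt} with the additional constraint that the presence of $\hat{\gamma}^\C$ imposes.

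First I would record that constraint. By definition, $\hat{\gamma}^\C$ is chosen so that from any surviving set $S \subseteq \C$ there is a bit whose query leaves at most $(1-\hat{\gamma}^\C)|S|$ consistent strings regardless of the oracle's answer; and the permutation $\sigma$ of \lem{ordering} is exactly the greedy order realizing the largest such reduction at every step. Hence each of the $p_i$ bits learned in iteration $i$ shrinks the surviving set by a factor of at least $(1-\hat{\gamma}^\C)$, so the whole iteration shrinks it by at least $(1-\hat{\gamma}^\C)^{p_i}$; as the set starts at size $M$ and ends at size at least $1$, we get $M \prod_{i=1}^{r}(1-\hat{\gamma}^\C)^{p_i} \geq 1$. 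Taking logarithms and using $\log(1-\hat{\gamma}^\C) \leq -\hat{\gamma}^\C$ gives $\sum_{i=1}^r p_i \leq (\log M)/\hat{\gamma}^\C$, which in particular forces $p_i \leq (\log M)/\hat{\gamma}^\C$ and $r \leq (\log M)/\hat{\gamma}^\C$.

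Finally I would feed this into \lem{opt}. The constraints now are precisely those of \lem{opt} with $N$ replaced throughout by $N' \defeq \lceil (\log M)/\hat{\gamma}^\C \rceil$ (the ceiling is harmless since $N' = \Theta((\log M)/\hat{\gamma}^\C)$ and $C$ is monotone in $N$), so the maximum in question is at most $C(M, N') = O(\sqrt{N'\log M / (\log(N'/\log M)+1)})$. Substituting $N'/\log M = \Theta(1/\hat{\gamma}^\C)$ and $N'\log M = \Theta((\log M)^2/\hat{\gamma}^\C)$, and using $1/\hat{\gamma}^\C \geq 2$ so that $\log(1/\hat{\gamma}^\C) \geq 1$ and the ``$+1$'' in the denominator is absorbed into a constant, this is $O(\log M \cdot \sqrt{(1/\hat{\gamma}^\C)/\log(1/\hat{\gamma}^\C)})$, which is the claimed bound.

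The only delicate point is the middle step: one must check that the greedy ordering guaranteed by \lem{ordering} is compatible with the ``worst case over the oracle's response'' quantity in the definition of $\hat{\gamma}^\C$, i.e.\ that choosing the most informative bit at each stage indeed certifies the per-step reduction factor $(1-\hat{\gamma}^\C)$ no matter what $x$ turns out to be. Everything else is a direct reuse of the $\gamma_2$-norm composition argument of \sec{gamma} together with \lem{opt} applied to the rescaled parameter $N'$.
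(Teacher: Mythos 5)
Your proposal matches the paper's proof essentially step for step: the paper likewise takes the cost bound $O(\sum_i\sqrt{p_i(x)})$ from the composition analysis, derives the extra constraint $M\prod_i(1-\hat{\gamma}^\C)^{p_i}\geq 1$ from the fact that the greedy ordering of \lem{ordering} realizes the per-bit reduction factor $(1-\hat{\gamma}^\C)$, uses $\log(1-\hat{\gamma}^\C)\leq -\hat{\gamma}^\C$ to get $\sum_i p_i\leq(\log M)/\hat{\gamma}^\C$, and then substitutes $N\mapsto(\log M)/\hat{\gamma}^\C$ in \lem{opt}. Your final substitution and the handling of the $+1$ via $1/\hat{\gamma}^\C\geq 2$ are correct, so this is the same argument, just spelled out in slightly more detail.
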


This shows that \alg{final} performs well on any set $\C$, since $Q(\oip(\C))=\Omega(\sqrt{1/\hat{\gamma}^\C} + \frac{\log M}{\log N})$. By combining this lower bound with our upper bound, we see that \alg{final} makes $O(\frac{Q(\oip(\C))^2}{\sqrt{\log Q(\oip(\C))}} \log N)$ queries, which means it can be at most about quadratically worse than the best algorithm for $\oip(\C)$.
 
%%%%%%%%%%%%%%%%%%%%%%%%%%%%%%%%%%%%%%%%%%%%%%%%%%%%%%%%%%%%%%%%%%%%%%%%%%%%%%
\subsection{Boolean matrix multiplication}
\label{sec:bmm}

In this section we show how to improve the upper bound on Boolean matrix multiplication (BMM) from $O(n\sqrt{l} \poly(\log n))$ \cite{JKM12} to $O(n\sqrt{l})$, where $n$ is the size of the matrices and $l$ is the sparsity of the output. 
Just like in the oracle identification problem, we will break up the BMM algorithm of \cite{JKM12} into a sequence of algorithms $A_i$ such that the output of $A_i$ is the input of $A_{i+1}$, and convert each algorithm into a feasible solution for the corresponding SDP.

The BMM algorithm is almost of this form. The main algorithm uses two subroutines for graph collision, one to solve the decision problem and another to find all collisions. The first subroutine solves the decision problem on a bipartite graph with $2n$ vertices and $m$ nonedges in $O(\sqrt{n}+\sqrt{m})$ queries. Since the graph is not part of the oracle input, this query complexity is not input dependent, and thus there is a feasible SDP solution for this problem with $c(x) = O(\sqrt{n}+\sqrt{m})$ for all $x$, using the known characterization of Lee et al.\ \cite{LMR+11}.

The second subroutine finds all graph collisions in an instance with $\lambda$ collisions using $O(\sqrt{n\lambda} + \sqrt{m})$ queries. This upper bound is input dependent, since $\lambda$ is a function of the input. In this subroutine, the only input-dependent algorithm is the variant of Grover's algorithm that requires $O(\sqrt{nk})$ queries to output all the ones in an $n$-bit string when there are $k$ ones. It is easy to show that there is a feasible cost function for this with $c(x)=O(\sqrt{nk})$. For example, we may compose the SDP solution for the find-first-one function (\thm{firstmarked}) with itself repeatedly to find all ones. The cost function of the resultant SDP will satisfy $c(x) = O(\sum_i \sqrt{p_i})$, where $p_i$s are the locations of the ones. By the Cauchy--Schwarz inequality this is $O(\sqrt{nk})$. Thus the second graph collision algorithm also has a feasible cost function $c(x)=O(\sqrt{n\lambda} + \sqrt{m})$.

The BMM algorithm breaks up the problem into $n$ instances of graph collision. The algorithm repeatedly searches for indices $i$ such that the $i$th graph collision instance has a collision. Then it finds all graph collisions of this instance and repeats. Instead of searching for an arbitrary $i$, we can search for the first index $i$. The problem of searching for the first $i$ that has a graph collision is the composition of the find-first-one function (\thm{firstmarked}) with the graph collision function. This is a composition in the sense that each oracle input bit of the first problem is the output bit of another query problem.  It is known that the optimal value of the $\gamma$ SDP for $f \circ g^n$ is at most $\gamma(J-F)\gamma(J-G)$. Similarly, it can be shown that there is a feasible cost function for $f \circ g$ that is at most the product of the cost functions. This is similar to \cite[Lemma 5.1]{LMR+11} or \lem{triangle}, but instead of taking the direct sum of the vectors, we take the tensor product.

Finally, let $p_1, \ldots, p_t$ be the positions of indices found in the algorithm. The search problem requires $O(\sqrt{p_i}(\sqrt{n}+\sqrt{m}))$ queries for each $i$, since it is the composition of the two above-mentioned algorithms. The algorithm that finds all graph collisions has a feasible cost function $O(\sqrt{n\lambda_i} + \sqrt{m})$, where $\lambda_i$ is the number of graph collisions in the $i$th graph collision instance.  This gives a feasible cost function for BMM with cost $O(\sum_i (\sqrt{p_i}(\sqrt{n}+\sqrt{m})+\sqrt{n\lambda_i} + \sqrt{m}))$, which is the same optimization problem solved in \cite{JKM12}, without log factors. This is $O(n\sqrt{l})$.

%%%%%%%%%%%%%%%%%%%%%%%%%%%%%%%%%%%%%%%%%%%%%%%%%%%%%%%%%%%%%%%%%%%%%%%%%%%%%%

\section{Discussion and open questions}
\label{sec:open}

Some readers may wonder if the composition theorem could be avoided by using a standard argument about expected running times (or query complexity), which has the following form: Given $k$ Las Vegas algorithms with expected running times $t_1,\ldots,t_k$, running these algorithms in succession will yield an algorithm with expected running time $\sum_i t_i$ by the linearity of expectation. If we now terminate the algorithm after (say) 5 times its expected running time, then by Markov's inequality we have a bounded-error algorithm with worst-case running time $O(\sum_i q_i)$. However, to use this argument the individual algorithms need to be zero error. If the algorithms are merely bounded error, then the final answer may be incorrect even if one of the $k$ bounded-error algorithms errs. In our applications, oracle identification and Boolean matrix multiplicaiton, we use a subroutine to find the first marked 1 in a string. This algorithm has bounded error since it is too expensive to verify (with zero error) that a given 1 is indeed the first 1 in a string.

Our composition theorem only works for solutions of the filtered $\gamma_2$-norm SDP, not for quantum query complexity itself.  While this is sufficient for our application, it would be interesting to know if bounded-error quantum algorithms with input-dependent query complexities can be composed in general without incurring log factors.

While the query complexity of oracle identification in terms of $M$ and $N$ has been fully characterized, finding an optimal quantum algorithm for $\oip(\C)$ remains open.  The corresponding problem for classical query complexity is also open. 
It would also be interesting to study time-efficient oracle identification algorithms for specific sets $\C$, since none of the known algorithms, including ours, is known to be time efficient.

%%%%%%%%%%%%%%%%%%%%%%%%%%%%%%%%%%%%%%%%%%%%%%%%%%%%%%%%%%%%%%%%%%%%%%%%%%%%%%

\section*{Acknowledgments}

I thank Andrew Childs and Ben Reichardt for helpful discussions, Seiichiro Tani for pointing me to Ref.~\cite{AIN+09}, and Andrew Childs and Ansis Rosmanis for comments on a preliminary draft.
This work was supported in part by NSERC, the Ontario Ministry of Research and Innovation, and the US ARO.

%%%%%%%%%%%%%%%%%%%%%%%%%%%%%%%%%%%%%%%%%%%%%%%%%%%%%%%%%%%%%%%%%%%%%%%%%%%%%%

\bibliographystyle{amsalpha}
\bibliography{OIP}

%%%%%%%%%%%%%%%%%%%%%%%%%%%%%%%%%%%%%%%%%%%%%%%%%%%%%%%%%%%%%%%%%%%%%%%%%%%%%%
\appendix

%%%%%%%%%%%%%%%%%%%%%%%%%%%%%%%%%%%%%%%%%%%%%%%%%%%%%%%%%%%%%%%%%%%%%%%%%%%%%%
\section{Oracle identification lower bound}
\label{app:lb}

The main result, \thm{quantumOIP}, naturally has two parts. In this section we prove the lower bound: For any $N < M \leq 2^N$, $Q(\oip(M,N)) = \Omega\(\sqrt{\frac{N\log M}{\log({N}/{\log M})+1}}\)$.

We start with the following lemma, which follows from the proof of \cite[Theorem 2]{AIK+04}, and also appears as \cite[Theorem 5]{AIK+07}.

\begin{lemma}
\label{lem:threshold}
There exists a set of $N$-bit strings, $\C$, of size at most $M$, such that $Q(\oip(\C)) = \Omega(\sqrt{(N-k+1)k})$, for any $k$ that satisfies $\binom{N}{k-1} + \binom{N}{k} \leq M$. 
\end{lemma}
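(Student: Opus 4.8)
The plan is to establish \lem{threshold} by extracting a hard instance of oracle identification from the known adversary/quantum lower bound for threshold-type functions in \cite{AIK+04,AIK+07}. First I would recall the source lower bound: there is a set of $N$-bit strings on which deciding a threshold function around Hamming weight $k$ requires $\Omega(\sqrt{(N-k+1)k})$ queries; this is essentially the lower bound for distinguishing inputs of weight $k-1$ from inputs of weight $k$ (a ``$k$-threshold'' problem), proved via the (positive-weight or polynomial) adversary method in \cite[Theorem 2]{AIK+04}, and restated as \cite[Theorem 5]{AIK+07}. The key observation is that oracle identification is at least as hard as this decision problem whenever the identifying set $\C$ can be chosen to \emph{contain} witnesses for both sides of the threshold: if $\C$ includes all weight-$(k-1)$ strings and all weight-$k$ strings (on some ground set of $N$ coordinates), then any algorithm that identifies $x\in\C$ in particular determines $|x|$, hence solves the threshold problem, so $Q(\oip(\C)) = \Omega(\sqrt{(N-k+1)k})$.

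The main steps, in order, are: (1) fix $k$ with $\binom{N}{k-1}+\binom{N}{k}\le M$; (2) let $\C$ be the set of all $N$-bit strings of Hamming weight exactly $k-1$ or exactly $k$, so that $|\C| = \binom{N}{k-1}+\binom{N}{k}\le M$ as required (if one insists on $|\C|\le M$ rather than $=M$, this already suffices; one could pad with arbitrary extra strings to reach size exactly $M$ without decreasing the complexity, but this is unnecessary since the statement only asks for size at most $M$); (3) observe that the map $x\mapsto \mathbf{1}[|x|=k]$ is a function of the output of any oracle identification algorithm for $\C$, so a $T$-query algorithm for $\oip(\C)$ yields a $T$-query bounded-error algorithm distinguishing weight $k-1$ from weight $k$; (4) invoke the cited lower bound to conclude $T = \Omega(\sqrt{(N-k+1)k})$. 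I would also remark that for the reduction to be nontrivial we implicitly need $1\le k\le N$, which is exactly the range in which $\binom{N}{k-1}+\binom{N}{k}$ is defined and the bound $\sqrt{(N-k+1)k}$ is meaningful.

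The only real subtlety — and the step I expect to require the most care — is making sure the quoted lower bound is stated for precisely the instance I use. The adversary lower bound in \cite{AIK+04} is naturally phrased for a promise problem on strings of two adjacent Hamming weights (or as an exact quantum query bound for the ``$k$-out-of-$N$'' threshold); I need the version where the hard set is exactly $\{x : |x| \in \{k-1,k\}\}$ and the bound is $\Omega(\sqrt{(N-k+1)k})$, uniformly over all valid $k$. Since the excerpt explicitly says this ``follows from the proof of \cite[Theorem 2]{AIK+04}, and also appears as \cite[Theorem 5]{AIK+07},'' I would simply cite that and present the reduction above as the one-line argument turning a threshold lower bound into an $\oip$ lower bound. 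No new calculation is needed beyond the binomial count $|\C| = \binom{N}{k-1} + \binom{N}{k}$; the optimization over $k$ to turn this into the clean $\Omega\!\left(\sqrt{\tfrac{N\log M}{\log(N/\log M)+1}}\right)$ bound of \thm{quantumOIP} is deferred to the subsequent part of \app{lb} and is not part of this lemma.
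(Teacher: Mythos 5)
Your proposal matches the paper's own argument: the paper also takes $\C$ to be all strings of Hamming weight $k-1$ or $k$, notes $|\C| = \binom{N}{k-1}+\binom{N}{k} \le M$, and reduces the promise $k$-threshold problem (with its known $\Omega(\sqrt{(N-k+1)k})$ adversary lower bound from \cite{AIK+04,AIK+07}) to $\oip(\C)$. This is the same reduction in the same order, so no substantive difference.
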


This can be shown using the original quantum adversary method of Ambainis \cite{Amb02}. First we prove a lower bound for the promise $k$-threshold problem, in which we have to decide if the input has Hamming weight $k-1$ or $k$ promised that one of these is the case. This problem has a lower bound of $\Omega(\sqrt{(N-k+1)k})$. Thus if we take $\C$ to be the set of all strings with Hamming weight $k-1$ or $k$, the oracle identification problem on this set is at least as hard as the promise $k$-threshold problem, which gives us the claimed lower bound in \lem{threshold}.

Now it suffices to prove the following lemma.

\begin{lemma}
\label{lem:binom}
For any $N < M \leq 2^N$, there exists a $k$ in  $\Omega\({\frac{\log M}{\log({N}/{\log M})+1}}\)$ such that $\binom{N}{k-1} + \binom{N}{k} \leq M$.
\end{lemma}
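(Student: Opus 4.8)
The plan is to find a good value of $k$ by a direct estimate on binomial coefficients, splitting into two regimes according to how close $M$ is to $2^N$. First I would handle the ``easy'' regime where $M$ is polynomially bounded away from $2^N$, say $\log M \leq N/2$ (or more generally $\log M \leq cN$ for a suitable constant $c<1$); here the target quantity $\frac{\log M}{\log(N/\log M)+1}$ is $\Theta\!\left(\frac{\log M}{\log(N/\log M)}\right)$ and I want $k$ of that order. The natural choice is to take $k$ to be (a constant times) $\frac{\log M}{\log(N/\log M)}$ and verify $\binom{N}{k-1}+\binom{N}{k} \leq 2\binom{N}{k} \leq M$. Using the standard bound $\binom{N}{k} \leq (eN/k)^k$, it suffices to check $k\log(eN/k) \leq \log M - 1$. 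Substituting $k \approx \frac{\log M}{\log(N/\log M)}$ and using that $N/k \geq N/\log M$ (so $\log(eN/k)$ is within an additive constant of $\log(N/\log M)$, up to the $+1$), this reduces to an elementary inequality; I would absorb the loss into the implied constant in the $\Omega(\cdot)$, choosing $k = \lfloor \varepsilon \frac{\log M}{\log(N/\log M)+1}\rfloor$ for a small absolute constant $\varepsilon$.

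Next I would handle the ``hard'' regime where $M$ is very close to $2^N$, i.e. $\log M > cN$. Here $\log(N/\log M)+1 = \Theta(1)$, so the target is simply $k = \Omega(\log M) = \Omega(N)$, and I need to produce $k$ linear in $N$ with $\binom{N}{k-1}+\binom{N}{k} \leq M$. Since $M \geq 2^{cN}$, it suffices to note that $\binom{N}{k-1}+\binom{N}{k} \leq 2^{H(k/N)N + o(N)}$ where $H$ is the binary entropy (or just $\binom{N}{k} \leq 2^{H(k/N)N}$), and choose $k = \lfloor \alpha N \rfloor$ with $\alpha$ small enough that $H(\alpha) < c$; then for $N$ large the bound holds, and small $N$ can be absorbed into constants (or handled by taking $k$ as small as $1$, which always works when $M \geq N+1 \geq \binom{N}{0}+\binom{N}{1} = N+1$). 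This gives $k = \Omega(N) = \Omega(\log M) = \Omega\!\left(\frac{\log M}{\log(N/\log M)+1}\right)$ as required.

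The one point needing a little care, which I expect to be the main obstacle, is matching the two regimes cleanly and making sure the constants in $\binom{N}{k}\leq (eN/k)^k$ versus the $\log(N/\log M)$ in the denominator line up — in particular that $\log(eN/k)$ really is $O(\log(N/\log M)+1)$ uniformly, including the boundary case where $N/\log M$ is a small constant (which is exactly why the $+1$ appears in the statement). I would dispatch this by verifying $k \log(eN/k) \leq \log M$ separately but uniformly: write $k = \varepsilon \frac{\log M}{L}$ with $L := \log(N/\log M)+1 \geq 1$, so $N/k = \frac{NL}{\varepsilon \log M} \geq \frac{L}{\varepsilon}(N/\log M)$, hence $\log(eN/k) \leq 1 + \log L - \log\varepsilon + \log(N/\log M) \leq L(1 + \frac{\log L}{L} - \frac{\log \varepsilon}{L}) \leq L\cdot \frac{1}{\varepsilon}$ for $\varepsilon$ small enough (using $\log L \leq L$). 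Then $k\log(eN/k) \leq \varepsilon\frac{\log M}{L}\cdot\frac{L}{\varepsilon} = \log M$, as needed (and one absorbs the ``$-1$'' and the $\binom{N}{k-1}$ term into a slightly smaller $\varepsilon$). Finally I would take the floor of $k$, noting $k \geq 1$ whenever $M$ is large enough relative to a constant, and otherwise verifying the statement trivially for the finitely many remaining cases.
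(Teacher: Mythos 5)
Your proposal is correct and very close to the paper's argument in substance: the paper also splits on whether $M$ exceeds $2^{N/2}$, taking $k=N/10$ in the large-$M$ regime (with $\binom{N}{k}\le (Ne/k)^k\le(10e)^{N/10}<2^{0.48N}$) and $k=c\,\frac{\log M}{\log(N/\log M)}$ in the small-$M$ regime (with the same $(Ne/k)^k$ estimate and a calculation essentially identical to yours, showing $\log\binom{N}{k}\le m-1$ where $m=\log M$). The one genuine difference is your closing ``uniform'' verification. The paper is forced into a case split precisely because its choice of $k$ has $\log(N/\log M)$ in the denominator, which degenerates to $0$ as $M\to 2^N$; you instead define $k$ directly through $L:=\log(N/\log M)+1\ge 1$, and the elementary bound $\log L\le L$ then absorbs the $\log\log$-type error term and the $\log(1/\varepsilon)$ term uniformly across the whole range $N<M\le 2^N$, so no regime split is needed. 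This is a small but real streamlining. Both your sketch and the paper leave the same routine edge cases implicit (the floor possibly giving $k=0$, and very small $N$); you correctly observe that $k=1$ always suffices since $M\ge N+1=\binom{N}{0}+\binom{N}{1}$, which closes those cases, whereas the paper does not remark on them.
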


\begin{proof}
First note that if $M>2^{N/2}$, then $k=N/10$ satisfies the statement of the lemma, since $\binom{N}{k} \leq (Ne/k)^k \leq (10e)^{N/10} \leq 2^{\log(10e)N/10} < 2^{0.48N}$. In this range of $M$, $\Omega({\frac{\log M}{\log({N}/{\log M})+1}}) = \Omega(N)$.

For $M\leq 2^{N/2}$, let us choose  $k = c\frac{\log M}{\log({N}/{\log M})}$, for some constant $c<1$. In this range of $M$, this choice of $k$ is $\Omega({\frac{\log M}{\log({N}/{\log M})+1}})$.  Now we want $\binom{N}{k-1} + \binom{N}{k} \leq M$. Instead let us enforce that $\binom{N}{k} \leq M/2$ or $\log\binom{N}{k} / \log(M/2) \leq 1$. For convenience, let $m = \log M$.

We have $\frac{\log\binom{N}{k}}{m -1} \leq \frac{k\log(Ne/k)}{m -1} = c\frac{{m}\log(Ne/k)}{(\log({N}/{m}))(m-1)} = c \frac{m}{m - 1} \frac{\log(N/k)+\log e}{\log({N}/{m})}$. Since $m$ is large, $\frac{m}{m - 1} \leq 2$, so this expression is at most $2c \frac{\log(N/k)+\log e}{\log({N}/{m})}$, which is $2c \frac{\log(N/m)+\log(\log(N/m)/c)+\log e}{\log({N}/{m})}$, which is $2c \(1 + \frac{\log\log(N/m) + \log(1/c) +\log e}{\log (N/m)}\)$. Now since $N/m \geq 2$ by assumption, there is a choice for $c$ that makes this expression less than 1.
\end{proof}
Combining \lem{threshold} and \lem{binom} gives us $Q(\oip(M,N)) = \Omega\(\sqrt{\frac{N\log M}{\log({N}/{\log M})+1}}\)$.

%%%%%%%%%%%%%%%%%%%%%%%%%%%%%%%%%%%%%%%%%%%%%%%%%%%%%%%%%%%%%%%%%%%%%%%%%%%%%%
\section{Proof of \texorpdfstring{\lem{opt}}{Lemma}}
\label{app:proofs}

\opt*
\begin{proof}
First we define two closely related optimization problems and show that their optimum values upper bound $C(M,N)$. Let $\alpha_1$ and $\alpha_2$ denote the optimum values of problem 1 and 2 respectively. We will show that $C(M,N) \leq \alpha_1 \leq \alpha_2$ and then upper bound $\alpha_2$ using the dual of problem 2. Let $n \defeq \ceil{\log N}$ and $m \defeq \ceil{\log M}$.
\begin{center}
  \begin{minipage}[t]{2.6in}
    \centerline{\underline{Problem 1 ($\alpha_1$)}}\vspace{-7mm}
    \begin{align*}
      \text{maximize:}\quad & \sum_{i=1}^{r} \sqrt{q_i}\\
      \text{subject to:}\quad & \sum_{i=1}^{r} q_i \leq 2N,\\
			& \prod_{i=1}^{r} q_i \leq M^2,\\
			& r \in [N],\\
			& 2 \leq q_i \leq 2N \quad (\text{for }i \in [r]).
    \end{align*}
  \end{minipage}
  \hspace*{13mm}
  \begin{minipage}[t]{2.6in}
    \centerline{\underline{Problem 2 ($\alpha_2$)}}\vspace{-7mm}
    \begin{align*}
      \text{maximize:}\quad & \sum_{k=1}^{n+2}  \sqrt{2^k} x_k\\
      \text{subject to:}\quad & \sum_{k=1}^{n+2} 2^k x_k \leq 4N,\\
			& \sum_{k=1}^{n+2} k x_k \leq 4m,\\
			& x_k \geq 0 \quad (\text{for }k \in [n+2]).
    \end{align*}
  \end{minipage}
\end{center}
Let $p_1, \ldots , p_r, r$ be an optimal solution of the problem in the statement of the lemma. Thus $C(M,N) = \sum_{i=1}^{r} \sqrt{p_i}$. Define $q_i = 2p_i$, for all $i \in [r]$. This is a feasible solution of problem 1, since $\sum_{i} p_i \leq N \Rightarrow \sum_{i} q_i \leq 2N$, and $\prod_{i} \max\{2,p_i\} \leq M$ gives us $\prod_{i} 2 \leq M$ and $\prod_{i} p_i \leq M$, which together yield $\prod_{i} 2p_i \leq M^2$. Finally  $\sum_{i} \sqrt{p_i} \leq \sum_{i} \sqrt{2p_i}$, which gives us $C(M,N) \leq \alpha_1$.

Now let $q_1, \ldots , q_r, r$ be an optimal solution of problem 1. Thus $\alpha_1 = \sum_{i=1}^{r} \sqrt{q_i}$. Define $x_k = |\{i:\ceil{\log q_i} = k\}|$. We claim that this is a feasible solution of problem 2. $\sum_{i} q_i \leq 2N \Leftrightarrow \sum_{i} 2^{\log q_i} \leq 2N$, which implies $\sum_{i} 2^{\ceil{\log q_i}} \leq 4N$. We can rewrite $\sum_{i} 2^{\ceil{\log q_i}}$ as $\sum_k 2^k x_k$, which gives us  $\sum_{k} 2^k x_k \leq 4N$. The next constraint $\prod_{i} q_i \leq M^2$ implies $\sum_{i} \log q_i \leq 2m$. Since each $q_i \geq 2$, the number of terms in this sum is at most $2m$, thus $\sum_{i} \ceil{\log q_i} \leq \sum_{i} (\log q_i+1) \leq 4m$. Again, $\sum_{i} \ceil{\log q_i}$ is the same as $\sum_k k x_k$, which gives us $\sum_{k} k x_k \leq 4m$. Finally $\alpha_1 = \sum_{i} \sqrt{q_i} = \sum_{i} \sqrt{2^{\log q_i}} \leq \sqrt{2^{\ceil{\log q_i}}} \leq \sum_k \sqrt{2^k}x_k \leq \alpha_2$.

Problem 2 is a linear program, which gives us an easy way to upper bound $\alpha_2$. For convenience, let $N' = 4N$, $n' = \ceil{\log N'} = n+2$, and $m' = 4m$. Let the optimum values of the following primal and dual linear programs be $\alpha$ and $\beta$ respectively. Clearly $\alpha_2 = \alpha$. By weak duality of linear programming, we have $\alpha \leq \beta$.

\begin{center}
  \begin{minipage}[t]{2.6in}
    \centerline{\underline{Primal ($\alpha$)}}\vspace{-7mm}
    \begin{align*}
      \text{maximize:}\quad & \sum_{k=1}^{n'}  \sqrt{2^k} x_k\\
      \text{subject to:}\quad & \sum_{k=1}^{n'} 2^k x_k \leq N',\\
			& \sum_{k=1}^{n'} k x_k \leq m',\\
			& x_k \geq 0 \quad (\text{for }k \in [n']).
    \end{align*}
  \end{minipage}
  \hspace*{13mm}
  \begin{minipage}[t]{2.6in}
    \centerline{\underline{Dual ($\beta$)}}\vspace{-7mm}
    \begin{align*}
      \text{minimize:}\quad & N' y + m' z\\
      \text{subject to:}\quad & 2^k y + kz \geq \sqrt{2^k}, \quad (\text{for }k \in [n'])\\
			& y,z \geq 0.
    \end{align*}
  \end{minipage}
\end{center}

For convenience, define $d = \log(2N'/m')=\log(2N/m)$, which satisfies $d\geq 1$ since $m\leq N$. We can use any dual feasible solution to upper bound $\beta$.  Let $y= \sqrt{\frac{1}{2^d d}}$ and $z = \sqrt{\frac{2^d}{d}}$. Thus $\beta \leq N'y + m'z \leq 2\sqrt{2}\sqrt{\frac{N'm'}{\log(N'/m')+1}} = O\(\sqrt{\frac{Nm}{\log({N}/{m})+1}}\)$.

Let us check the constraints: Clearly $y,z \geq 0$; the other constraints require that
\[
\sqrt{\frac{2^k}{d2^d}} + \sqrt{\frac{k^22^d}{2^k d}} \geq {1}
\]
for all $k\geq 1$ and $d\geq 1$. Using $a+b \geq 2\sqrt{ab}$, the left-hand side of this equation is greater than $2k/d$. Thus the inequality clearly holds for $k \geq d$ (and even $k \geq d/2$). 

Now suppose $1\leq k\leq d$. Let us show that the second term $\sqrt{\frac{k^22^d}{2^k d}}$ is large enough. Since $\frac{k^2}{2^k}$ is concave in this range, the minimum is achieved at either $k=1$ or $k=d$.   For $k=1$, the second term becomes $\sqrt{2^d/d}$, and for $k=d$, the second term evaluates to $\sqrt{d}$. Both of which are at least 1 when $d \geq 1$.

Since the solution is feasible, we get $C(M,N) \leq \beta = O\(\sqrt{\frac{Nm}{\log({N}/{m})+1}}\)$. Finally, note that in the problem in the statement of the lemma, $\prod_{i} \max\{2,p_i\} \leq M$ forces $\sum_i p_i \leq M$, which also implies $p_i \leq M$ and $r \in M$.  Thus we may simply substitute $N$ with $M$ to get another valid upper bound. This gives us $C(M,N) = O(\sqrt{M})$.
\end{proof}

\end{document}